\long\def\symbolfootnote #1{\begingroup%
    \def\thefootnote{\fnsymbol{footnote}}\footnote[1]{#1}\endgroup}
\numberwithin{equation}{section}
\newcommand{\ket}[1]{\left| #1 \right\rangle}
\newcommand{\bra}[1]{\left\langle #1 \right|}
\newcommand{\pic}[3]{\raisebox{#1}{\includegraphics[height=#2]{#3}}}
\newcommand{\picw}[3]{\raisebox{#1}{\includegraphics[width=#2]{#3}}}
\newcommand{\nn}{\nonumber}
\newcommand{\cotan}{{\rm cotan}}
\newcommand{\half}{\frac{1}{2}}
\newcommand{\id}{\mathbf{1}}
\newcommand{\On}{{\rm O}(n)}
\newcommand{\Pb}{\mathbb{P}}
\newcommand{\Rc}{\check{R}}
\newcommand{\ul}{\underline}
\newcommand{\wh}{\widehat}
\newcommand{\wt}{\widetilde}
\theoremstyle{plain}
\newtheorem{thm}{Theorem}[section]
\newtheorem{lm}[thm]{Lemma}
\newtheorem{prop}[thm]{Proposition}
\theoremstyle{definition}
\newtheorem{defn}{Definition}[section]
\theoremstyle{remark}
\newcommand{\lmref}[1]{Lemma~\ref{lm:#1}}
\newcommand{\propref}[1]{Proposition~\ref{prop:#1}}
\newcommand{\defnref}[1]{Definition~\ref{defn:#1}}
\newcommand{\secref}[1]{Section~\ref{sec:#1}}
\newcommand{\figref}[1]{Figure~\ref{fig:#1}}
\newcommand{\tabref}[1]{Table~\ref{tab:#1}}
\begin{document}

\title{Finite-size left-passage probability in percolation}

\author{
  Yacine Ikhlef\footnote{Yacine.Ikhlef@unige.ch}~
  and
  Anita Ponsaing\footnote{Anita.Ponsaing@unige.ch}
  \bigskip\\
      {\small \em
        \begin{minipage}{0.9\textwidth}
          \begin{center}
            Section de Math\'ematiques, University of Geneva, Switzerland
          \end{center}
        \end{minipage}
      }
}

\maketitle

\begin{abstract}
  We obtain an exact finite-size expression for the probability that a
  percolation hull will touch the boundary, on a strip of finite width. Our
  calculation is based on the $q$-deformed Knizhnik--Zamolodchikov approach,
  and the results are expressed in terms of symplectic characters. In the
  large size limit, we recover the scaling behaviour predicted by Schramm's
  left-passage formula. We also derive a general relation between the
  left-passage probability in the Fortuin--Kasteleyn cluster model and the
  magnetisation profile in the open XXZ chain with diagonal, complex boundary
  terms.
\end{abstract}

\section{Introduction}
\label{sec:intro}

Percolation models in two dimensions play an important role both in theoretical physics and mathematics. On the physics side, it was one of the first models where the Coulomb gas approach~\cite{DFSZ87,Nienhuis84} was used to predict the critical exponents~\cite{SaleurD87}, and where the concepts of boundary conformal field theory (CFT) were put in practice~\cite{Cardy92}. Nowadays, it still attracts the community's attention, especially for its relation to logarithmic CFT. On the mathematics side, many rigorous studies of percolation have been pursued~\cite{Werner09}, and Smirnov proved~\cite{Smirnov01} that site percolation on the triangular lattice has a conformally invariant scaling limit, described by Schramm--Loewner evolution (SLE) with $\kappa=6$. Also, in combinatorics, the Razumov--Stroganov relation~\cite{CantiniS11,RazStr04,RazStr05} identifies the components of the percolation transfer matrix eigenvector with the enumeration of plane partitions and alternating sign matrices.

The main objects of study in percolation are the percolation clusters and the lattice curves surrounding them, known as hulls. In the scaling limit, the correlation functions of these hulls are conjectured~\cite{SaleurD87} to be described by a Coulomb gas CFT~\cite{DFSZ87,Nienhuis84}, and thus to satisfy some partial differential equations (PDEs) given by the ``null-vector equations''. Some of these PDEs can be solved explicitly, e.g., the one for the crossing probability (the probability that a cluster connects two sides of a rectangle)~\cite{Cardy92}. A very fruitful approach to relate CFT and SLE is to express the null-state equations of CFT as martingale conditions for the SLE observables~\cite{BauerB03,Cardy05Network}.

The left-passage probability $P_{\rm left}(z)$, i.e., the probability for an open, oriented hull to pass to the left of a fixed point $z$ of the system, is one of these observables that can be easily obtained for percolation (and more generally, for the Potts and $\On$ models) both from the CFT and SLE viewpoints. In the SLE literature, this result is known as {\it Schramm's formula}~\cite{Schramm01}. In this paper, we address the determination of $P_{\rm left}$ {\it on the lattice}, in the infinite strip geometry, using rigorous techniques based on the Yang--Baxter and quantum Knizhnik--Zamolodchikov ($q$KZ) equations, as well as the Bethe Ansatz for the related six-vertex model.

The $q$KZ approach is particularly powerful for loop models with a trivial partition function ${\cal Z}=1$~\cite{dG05,dGPS09,dGP07,DF05,ZJ07}. In several cases, it allows the explicit determination of the dominant transfer matrix eigenvector, and it turns out that the components of this vector are integers enumerating plane partitions and alternating sign matrices. Also, this technique was recently used by one of the present authors to calculate a finite-size correlation function in percolation, namely the ``transverse current'' across a strip~\cite{dGNP10}. A complementary approach, with a larger scope, is to map a loop model onto an integrable spin chain~\cite{Bax82}, and use the Bethe Ansatz to obtain correlation functions in the form of determinants~\cite{KitanineKMNST07,KitanineKMNST08}. This approach actually extends to the Fortuin--Kasteleyn (FK) cluster model with cluster weight $Q$, and the percolation model is recovered for $Q=1$.

The layout of this paper is as follows. In \secref{perco}, we recall the exact equivalence~\cite{Bax82} between bond percolation on the square lattice and the Temperley--Lieb loop model with weight $n=1$, briefly review its conjectured relation to SLE$_6$, and state our main results. In \secref{qKZ}, we set up our notations for the transfer matrix and recall the basic steps of the $q$KZ approach. In \secref{Pb} we derive explicitly, for finite strip of width $L$, the probability $P_{\rm left}(z)$ with $z$ on the boundary of the strip. For a homogeneous system, we obtain fractional numbers with a simple combinatorial interpretation. In the large-$L$ limit, we recover the power law predicted by CFT and SLE. In \secref{Pleft}, we generalise to a generic point $z$: it turns out that similar symmetry and recursion relations hold, but are very difficult to solve in practice. However, we obtain two promising results in this case. First, we calculate $P_{\rm left}(z)$ numerically for homogeneous systems with up to $L=21$ sites, and observe good convergence to Schramm's formula. Second, we prove that, in the FK model, $P_{\rm left}(z)$ relates very simply to the magnetisation profile in an open XXZ spin chain with diagonal, complex boundary terms. We give our conclusions and perspectives in \secref{conclusion}.

\section{Percolation, Temperley--Lieb loops and SLE}
\label{sec:perco}

\subsection{Percolation hulls and their scaling exponents}

\begin{figure}[ht]
  \begin{center}
    \includegraphics[scale=0.8]{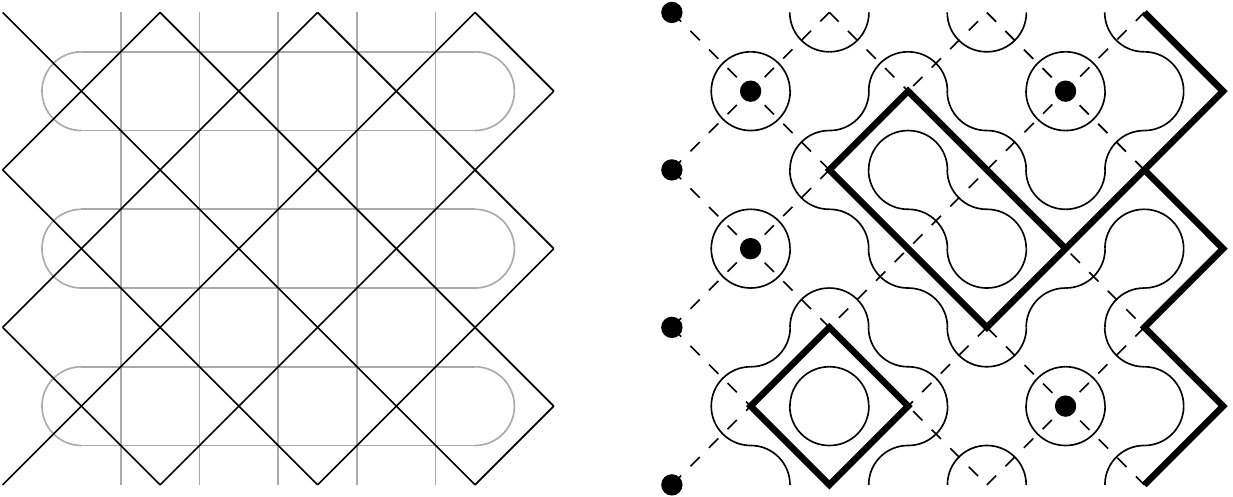}
  \end{center}
  \caption{Left: the original lattice $\cal L$ where bond percolation is
    defined (black) and its medial lattice $\cal M$ (grey). Right: an example
    cluster configuration (thick lines and dots) and the associated loop
    configuration (thin lines). On the left (resp. right) boundary, all edges are
    empty (resp. occupied).}
  \label{fig:medial}
\end{figure}

Consider a square lattice $\cal L$, on which each edge can be occupied by a bond with probability $p$, or empty with probability $(1-p)$. The connected components of the graph formed by all the sites and the occupied edges are called percolation clusters. We now look at the medial lattice $\cal M$ formed by the mid-edges of $\cal L$, where a loop configuration is associated to each cluster configuration~\cite{Bax82} (see \figref{medial}). These loops follow the external boundaries and the internal cycles of the clusters, and are called the percolation hulls. The model describing these loops is called the Temperley--Lieb loop model.

As usual, the scaling limit is defined by fixing a domain $\Omega$ of the plane, and covering it by a square lattice with spacing $a \to 0$. The scaling properties of percolation hulls at the critical point $p_c=1/2$ can be determined by the Coulomb gas approach, yielding the $\ell$-leg ``watermelon'' exponents $X_\ell = (\ell^2-1)/12$, the fractal dimension $d_f=7/4$, and the correlation-length exponent $\nu=4/7$.

\subsection{Strip geometry and the SLE model}

\begin{figure}[ht]
  \begin{center}
    \includegraphics[scale=0.8]{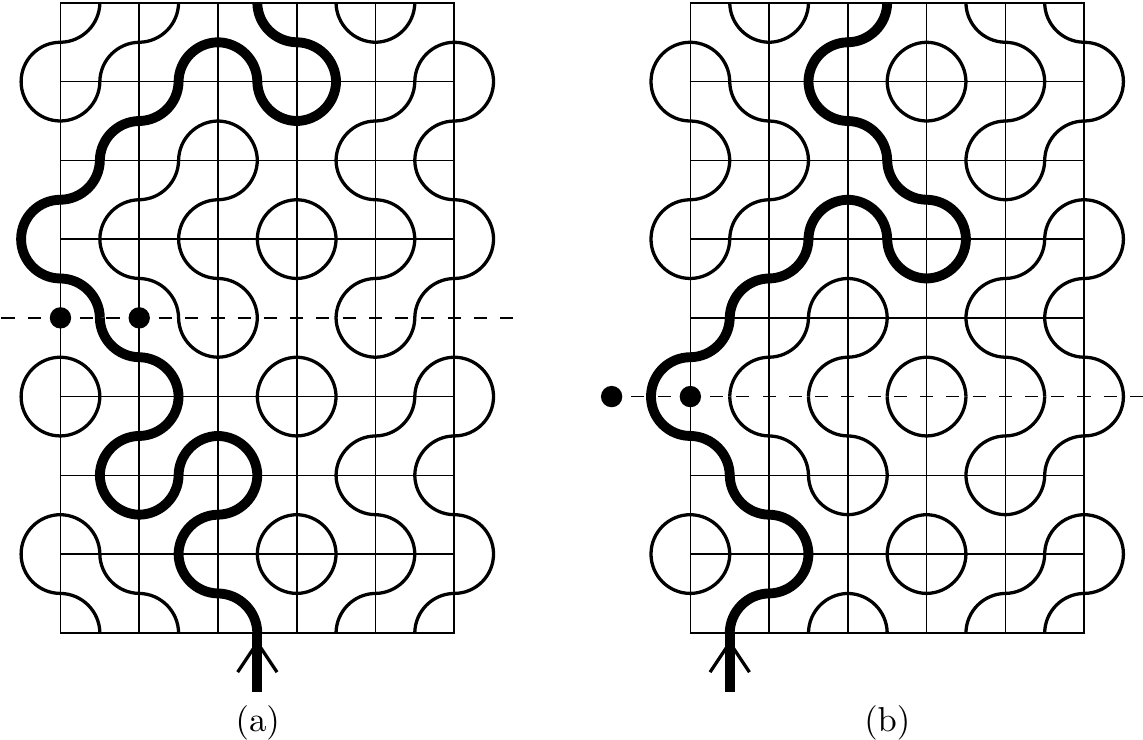}
  \end{center}
  \caption{(a) A configuration contributing to the boundary passage
    probability $P_b$ on the selected section (dotted line).
    (b) A configuration contributing to the boundary passage
    probability $\wh P_b$. In both cases, the dots indicate for which edge
    the passage probability is defined.}
  \label{fig:path}
\end{figure}

We suppose that the system is defined on an infinite strip of $\cal M$, of width $L$, where $L$ is odd. Let us specify the boundary conditions (BCs) which shall be used throughout the paper. On one boundary, we set all the edges of $\cal L$ to be occupied (wired BC), and on the other boundary, all edges are empty (free BC). This simply means the hulls must reflect on both boundaries. Since $L$ is odd, there exists an infinite open hull propagating along the strip, which we shall denote $\gamma$ (see \figref{path}).

In the scaling limit, the random curve $\gamma$ is conjectured to be distributed as chordal SLE$_{\kappa=6}$. In the SLE model, the left-passage probability $P_{\rm left}$ is obtained by solving an ordinary differential equation of order two. If we normalise the width of the strip to $L\times a=1$ and denote by $x \in [0,1]$ the horizontal coordinate, Schramm's formula reads~\cite{Schramm01}
\begin{equation} \label{eq:schramm}
  P_{\rm left}(x) = \half - \frac{\Gamma(4/\kappa)}
  {\sqrt{\pi} \ \Gamma\left( \frac{8-\kappa}{2\kappa}\right)}
  \ \cotan \ \pi x \times {_2F_1} \left(
  \half, \frac{4}{\kappa};\frac{3}{2};-\cotan^2 \pi x
  \right) \,,
\end{equation}
where ${_2F_1}$ is the hypergeometric function, which gives the probability of ``touching'' the boundary
\begin{equation} \label{eq:schramm2}
  P_{\rm left}(x) \ \mathop{\sim}_{x \to 0} \ \frac{\Gamma(4/\kappa)}
  {2\sqrt{\pi} \ \Gamma \left( \frac{8+\kappa}{2\kappa} \right)} \ (\pi x)^{\frac{8-\kappa}{\kappa}}
  \,,
\end{equation}
with $\kappa=6$ for percolation. In the present work, we derive some exact results for $P_{\rm left}(x)$ in the lattice model, i.e., we look for analogs of~\eqref{eq:schramm} and \eqref{eq:schramm2} {\it in finite size}.

\subsection{The Fortuin--Kasteleyn cluster model}

Most of the above results can be generalised to $4\leq \kappa \leq 8$ by considering a modified cluster model, called the Fortuin--Kasteleyn (FK) model, where each cluster gets a Boltzmann weight $Q$ (the critical regime is $0 \leq Q \leq 4$), so that the Boltzmann weight of a cluster configuration $C$ is
\[
  W[C] = Q^{\# \rm clusters(C)} \ v^{\# \rm occupied \ edges(C)} \,.
\]
Loops are defined on the medial lattice similarly to percolation, and, using the Euler relation, one finds that the above Boltzmann weight can be written as
\[
  W[C] \propto \sqrt{Q}^{\# \rm closed \ loops(C)}
  \ \left(\frac{v}{\sqrt{Q}}\right)^{\# \rm occupied \ edges(C)} \,.
\]
This defines the Temperley--Lieb loop model with weight $n=\sqrt{Q}$ (see \secref{TL}). It is conjectured (and proved for $Q=2$) that the hulls of FK clusters are distributed in the scaling limit as SLE$_\kappa$ with the relation
$$
\sqrt{Q} = -2 \cos \frac{4\pi}{\kappa} \,,
\qquad 4 \leq \kappa \leq 8 \,.
$$

\subsection{Statement of results}

\begin{itemize}

\item
 In the percolation model with $n=-(q+q^{-1})$, using the $q$KZ approach, we obtain explicitly (see \secref{inhomresult}) the probability that the open path $\gamma$ passes through a boundary edge. It reads:
  \begin{eqnarray*}
    P_b(z_1,z_2,\ldots,z_L) &=& \frac{\chi_{L-1}(z_2^2,\ldots,z_L^2)
      \ \chi_{L+1}(z_1^2,z_1^2,z_2^2,\ldots,z_L^2)}
    {\chi_L(z_1^2,\ldots,z_L^2)^2} \,, \\
    \wh P_b(w;z_1,z_2,\ldots,z_L) &=& \frac{(q^{-1}-q)(w^2-w^{-2})}{\prod_{i=1}^L k(1/w,z_i)}
    \times
    \frac{\chi_{L+1}(w^2,z_1^2,\ldots,z_L^2) \ \chi_{L+1}((q/w)^2,z_1^2,\ldots,z_L^2)}
         {\chi_{L}(z_1^2,\ldots,z_L^2)^2} \,,
  \end{eqnarray*}
 depending on the sublattice where the boundary edge sits (see \figref{path}). In the above expressions, $q$ has been set to $\exp(2i\pi/3)$, corresponding to $n=Q=1$, the $z_j$'s are the vertical spectral parameters, $w$ is the horizontal spectral parameter, $\chi_L$ is the symplectic character (see notations in \secref{qKZ}), and we have defined
  $$k(a,b):=(q/a)^2+(a/q)^2-b^2-b^{-2} \,.$$

\item
 For a homogeneous percolation system (see \secref{homresult}), this becomes
  $$
    P_b = \frac{A_V(L) \ A_V(L+2)}{N_8(L+1)^2} \,, 
    \qquad \text{and} \qquad
    \wh P_b = \frac{3}{4^L} \times \frac{A(L)^2}{N_8(L+1)^2 \ A_V(L)^2} \,,
  $$
 where $A(L)$, $A_V(L)$, and $N_8(L)$ are the number of $L\times L$ alternating sign matrices, $L\times L$ vertically symmetric alternating sign matrices, and $L\times L\times L$ cyclically symmetric self-complementary plane partitions respectively. Moreover, for a large system size $L$, both $P_b$ and $\wh P_b$ scale like $L^{-1/3}$ (see \secref{limresult}), which is consistent with~\eqref{eq:schramm2}.
  
\item
 In the critical Fortuin--Kasteleyn model with generic parameter $Q \in [0,4]$, we define the probability $X_j$ that the path $\gamma$ passes  through the $j$th horizontal edge of a given section as in \figref{path}a  (so that $P_b=X_1$). Defining $P_{\rm left}$ on the dual of $\cal M$, one can write
 $$
 P_{\rm left}\left( x_{j+1/2} \right) - P_{\rm left}\left( x_{j-1/2} \right)
 = (-1)^{j-1} X_j \,,
 \qquad \text{where} \qquad
 x_{j} := j/L \,.
 $$
 We find the relation (see \secref{FK})
 $$
 X_j = (-1)^{j-1} \  {\rm Re} \left(
 \frac{\bra{\Psi_0}\sigma_j^z\ket{\Psi_0}}
      {\langle \Psi_0|\Psi_0 \rangle}
      \right) \,,
 $$
 where $\ket{\Psi_0}$ is the groundstate eigenvector of the open XXZ Hamiltonian
 $$
 {\cal H}_{\rm XXZ} := \sum_{j=1}^{L-1} \left[
 \sigma_j^x \sigma_{j+1}^x + \sigma_j^y \sigma_{j+1}^y
 + \half(q+q^{-1}) \sigma_j^z \sigma_{j+1}^z
 \right]
 - \half(q-q^{-1}) (\sigma_1^z-\sigma_L^z) \,,
 $$
 where $\sqrt{Q} = -(q+q^{-1})$.
\end{itemize}

\section{The $q$KZ approach}
\label{sec:qKZ}

\subsection{The Temperley--Lieb loop model}
\label{sec:TL}

\begin{figure}[ht]
  \begin{center}
    \includegraphics[height=110pt]{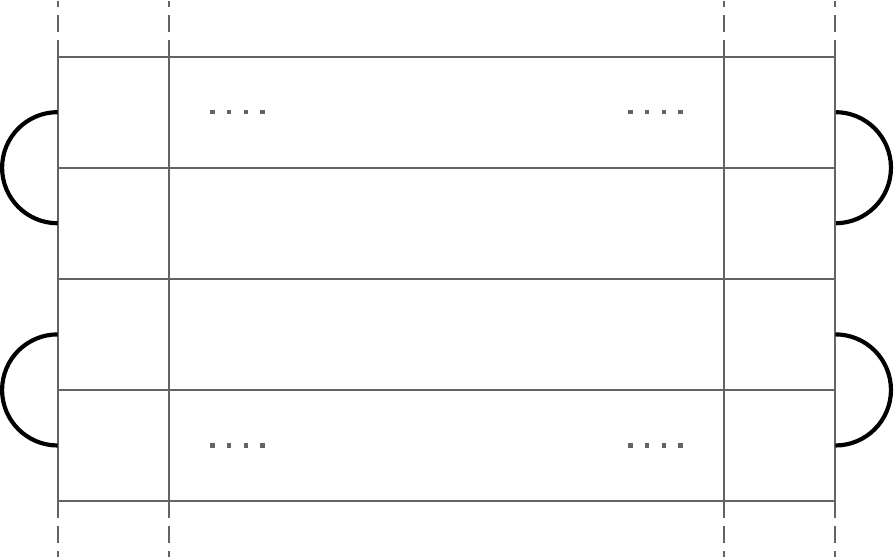}
    \caption{The lattice with two reflecting boundaries.}
    \label{fig:lat}
  \end{center}
  \vspace{-0.5cm}
\end{figure}

The Temperley--Lieb loop model with wired (or reflecting) boundaries \cite{dGP07,DF05,DF07,DFZJ07} is defined on a square lattice, where each face is decorated with loops in one of the following two ways:
\[\pic{-10pt}{25pt}{R1}\quad \text{and} \quad\pic{-10pt}{25pt}{R2}\quad.\]
Every closed loop gets a weight
$$
n = -(q+q^{-1}) \,.
$$
The chosen boundary conditions for this model require that the lattice is infinite in height and of finite width $L$. On the left and right are reflecting boundary conditions, as in \figref{lat}.

Drawing a horizontal line across the width of the lattice, we consider the connectivities of the loops below the line while ignoring the paths the loops take, as well as any closed loops. We refer to this pattern of connectivities as a link pattern, and we denote by ${\rm LP}_L$ the set of link patterns for a given system width $L$. For odd system size $L=2m-1$ they are enumerated by the $m$th Catalan number, $(2m)!/(m!(m+1)!)$. An example link pattern for $L=7$ is
\[ \pic{-13pt}{25pt}{L7linkpat} \,. \]
We label the link patterns by $\ket\alpha$, using the shorthand notation of ``$(\cdots)$'' to indicate a pair of sites connected by a loop, and ``$|$'' to indicate the single unpaired loop which always exists in an odd-sized system. As an example, the above link pattern is indicated by $\ket{\alpha} = \big|(())|()\big\rangle$.

The link patterns for a fixed $L$ form a representation of the Temperley--Lieb algebra, generated by $\{e_i,1\leq i\leq L-1\}$, with $e_i$ depicted as
\[ \pic{-11pt}{30pt}{eigen} \,. \]
The relations for the Temperley--Lieb algebra are
\begin{equation}
  \label{eq:TL}
  \begin{array}{rcl}
    e_i^2 &=& n \ e_j \,, \\
    e_i e_{i\pm 1} e_i &=& e_i \,, \\
    e_i e_j &=& e_j e_i \qquad \text{if $|i-j|>1$.}
  \end{array}
\end{equation}
Multiplication corresponds to concatenating the depictions of the generators, giving a weight of $n$ to every closed loop, and disregarding the paths the loops take. In this way, we obtain relations between the link patterns such as $e_4\big|(())|()\big\rangle=\big||()()()\big\rangle$.

A state in $V_L={\rm span}({\rm LP}_L)$ is written as
\[ \ket{\phi}=\sum_{\alpha\in\text{LP}_L} \phi_\alpha \ket{\alpha}. \]
We look at all the possible configurations of two rows of the lattice, and consider how they send a given link pattern to another. We can write  this as a matrix $t$ which acts on $V_L$, and we refer to this as the transfer matrix.

We take an arbitrary initial state $\ket{\text{in}}$ and act $N$ times with the transfer matrix $t$. As $N\rightarrow\infty$, we get
\[ \lim_{N\rightarrow\infty}t^N\ket{\text{in}}\propto \Lambda^N\ket\Psi, \]
where $\Lambda$ is the maximum eigenvalue of $t$ and $\ket\Psi$ is the corresponding eigenvector, also known as the ground state. When $n=1$ all the weights are probabilities and therefore $\Lambda=1$. The components of $\ket\Psi$ can be thought of as the relative probabilities of the possible link patterns. In a similar way we define $\bra\Psi$, which is the groundstate of the rotated lattice, giving the relative probabilities of upward link patterns. The inner product between upward and downward link patterns is simply
\[ \bra\beta\alpha\rangle := n^{\#\text{closed loops}} , \quad \forall\alpha,\beta. \]
Hence, the expectation value of some observable $\cal O$ reads
\[
\langle {\cal O} \rangle := \frac{\bra{\Psi} {\cal O} \ket{\Psi}}
          {\langle \Psi \ket{\Psi}} \,.
\]

\subsection{The $R$-matrix}
The possible states of each lattice square are described by the $R$-matrix.
\begin{defn}
\label{defn:R}
  \[ R(w,z)=\pic{-25pt}{50pt}{R}=
  \frac{[qz/w]}{[qw/z]}\quad\pic{-10pt}{25pt}{R1}\;
  +\;\frac{[z/w]}{[qw/z]}\quad\pic{-10pt}{25pt}{R2}\ ,\]
  where
  $$
  [z] := z-1/z \,.
  $$
\end{defn}
\begin{lm}
This $R$-matrix satisfies the Yang--Baxter equation (YBE)
\begin{equation}
\label{eq:YBE}
\pic{-35pt}{70pt}{ybe1}\;=\;\pic{-35pt}{70pt}{ybe2}\;,
\end{equation}
the unitarity relation
\begin{equation}
\label{eq:unitarity}
\pic{-18pt}{40pt}{Rid1}\;=\;\pic{-18pt}{40pt}{Rid2}\;,
\end{equation}
and the crossing relation
\begin{equation}
\label{eq:cross}
\pic{-30pt}{60pt}{cross1}\;=\;\pic{-30pt}{60pt}{cross2}\;.
\end{equation}
\end{lm}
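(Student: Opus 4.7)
The plan is to reduce all three diagrammatic identities to scalar checks by expanding
\[ R(w,z) = a(w,z)\,\id + b(w,z)\,e,\qquad a(w,z) = \frac{[qz/w]}{[qw/z]},\qquad b(w,z) = \frac{[z/w]}{[qw/z]}, \]
and then simplifying products of $\id$'s and $e$'s via the Temperley--Lieb relations~\eqref{eq:TL}. Every scalar identity that emerges will follow from the Ptolemy-type relation
\[ [ab]\,[cd] - [ac]\,[bd] = [a/d]\,[c/b], \]
which is immediate from the definition $[x] = x - x^{-1}$.

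For unitarity~\eqref{eq:unitarity}, the two $R$-matrices act on the same pair of strands, so $R(w,z)R(z,w)$ expands, via $e^2 = n\,e$, as $a(w,z)a(z,w)\,\id + [a(w,z)b(z,w)+a(z,w)b(w,z)+n\,b(w,z)b(z,w)]\,e$. The coefficient of $\id$ is manifestly $1$, while that of $e$ collapses to zero after applying the elementary identity $[qw/z] - [qz/w] = -(q+q^{-1})\,[z/w]$, which is exactly the condition $n = -(q+q^{-1})$. The crossing relation~\eqref{eq:cross} is handled in the same spirit: when a cap is attached to the top of the $R$-vertex, the $e$-term closes a loop of weight $n$, so each side collapses to a single scalar times a cap diagram, and matching the two scalars is a one-line identity between $a$, $b$ and $n$ under the parameter shift dictated by the picture.

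The main work is the Yang--Baxter equation~\eqref{eq:YBE}. Expanding both sides as products of three $R$-matrices, each a linear combination of $\id$ and one of $e_1,e_2$, gives $2^3 = 8$ monomials per side; using $e_i^2 = n\,e_i$, $e_1 e_2 e_1 = e_1$ and $e_2 e_1 e_2 = e_2$ reduces each side to a linear combination over the five-element basis $\{\id, e_1, e_2, e_1 e_2, e_2 e_1\}$ of the three-strand Temperley--Lieb algebra. Matching coefficients yields five scalar identities in the three spectral parameters $w_1, w_2, w_3$. The main obstacle is arranging these five identities so that the common $[qw_i/w_j]$ denominators cancel between the two sides before comparison; once this bookkeeping is carried out, each of the resulting Laurent polynomial identities follows from a single application of the Ptolemy relation above, together with elementary rearrangements of the bracket $[qx/y]$.
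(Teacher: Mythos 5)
The paper states this lemma without proof --- these are standard properties of the Temperley--Lieb $R$-matrix, implicitly deferred to the integrable-loop-model literature (Baxter, Di Francesco) --- so there is no argument of the authors' to compare yours against; a direct verification of the kind you propose is the natural thing to supply. Your decomposition $R=a\,\id+b\,e$ and the unitarity check are correct: $a(w,z)a(z,w)=1$ termwise, and the coefficient of $e$ vanishes precisely because $[qw/z]-[qz/w]=(q+q^{-1})[w/z]=-(q+q^{-1})[z/w]=n\,[z/w]\cdot(-1)^0$ matches the loop weight. For the YBE your bookkeeping can be shortened considerably: writing the braid form $(a(x)+b(x)e_1)(a(xy)+b(xy)e_2)(a(y)+b(y)e_1)=(a(y)+b(y)e_2)(a(xy)+b(xy)e_1)(a(x)+b(x)e_2)$, the coefficients of $\id$, $e_1e_2$ and $e_2e_1$ agree monomial-by-monomial, and the $e_2$ identity is the $x\leftrightarrow y$ image of the $e_1$ identity, so only one scalar relation is nontrivial, namely
\begin{equation*}
a(x)a(xy)b(y)+b(x)a(xy)a(y)+n\,b(x)a(xy)b(y)+b(x)b(xy)b(y)=a(x)b(xy)a(y)\,.
\end{equation*}
After clearing the denominators $[qx][qy][qxy]$ this is cubic in the brackets, so it does not reduce to a \emph{single} Ptolemy-type collapse; it needs two groupings, e.g. $[q/x][q/y]-[x][y]=(q-q^{-1})[q/xy]$ together with $[y][q/x]+[x][q/y]+(q+q^{-1})[x][y]=(q-q^{-1})[xy]$, after which the two resulting terms cancel. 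I have checked that this closes, so the gap is one of presentation rather than substance. The only point where you should be more careful is the crossing relation: its scalar content depends on which legs of the plaquette the pictures bend. If the two lower legs are joined by an arc, then indeed the $e$-term closes a loop of weight $n$ and the relation reduces to $a(w,z)+n\,b(w,z)=1$, which follows from the same bracket identity as unitarity; but if \eqref{eq:cross} is instead the $90^\circ$ rotation of the plaquette (which is what its use in the proof of \eqref{eq:Trecur}, followed by unitarity, suggests), no loop forms --- the rotation exchanges the two tile diagrams, and the identity to check is $a(w',z')=b(w,z)$, $b(w',z')=a(w,z)$ for the $q$-shifted arguments. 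Your method covers either case, but the claim that ``a loop of weight $n$ appears'' should not be asserted until the picture is fixed.
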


\begin{defn}
  The corresponding operator, acting on $V_L$, is
  \begin{equation}
    \label{eq:Rc}
    \Rc_j(w) := \frac{[q/w]}{[qw]} \ \id - \frac{[w]}{[qw]} \ e_j \,.
  \end{equation}
\end{defn}

\subsection{The transfer matrix, symmetries and recursions}

We now define the transfer matrix $t$, which describes all the possible configurations of two lattice rows \cite{DF05,Skl88}.
\begin{defn}
  \[ t(w;z_1,\ldots,z_L)= {\rm Tr}_w \left[
    R(w,z_1)\ldots R(w,z_L)R(z_L,1/w)\ldots R(z_1,1/w)
    \right], \]
  or pictorially,
  \[ t(w;z_1,\ldots,z_L)=\pic{-30pt}{60pt}{transmat}. \]
\end{defn}

\begin{lm}
  Thanks to the YBE \eqref{eq:YBE}, the transfer matrix satisfies the interlacing relation
  \begin{equation}
    \label{eq:interlace}
    \Rc_i(z_i/z_{i+1}) t(w;z_i,z_{i+1})=t(w;z_{i+1},z_i) \Rc_i(z_i/z_{i+1}),
  \end{equation}
  pictorially,
  \[ \pic{-31pt}{70pt}{interlaceR1}\quad=\quad\pic{-37pt}{70pt}{interlaceR2}\quad. \]
  Considering the possible configurations of the two tiles at either position $1$ or position $L$ of the transfer matrix also gives us, respectively,
  \begin{equation}
    \label{eq:boundinterlace}
    \begin{split}
      t(w;z_1,z_2\ldots) &= t(w;1/z_1,z_2,\ldots),\\
      t(w;\ldots,z_{L-1},z_L) &= t(w;\ldots,z_{L-1},1/z_L).
    \end{split}
  \end{equation}
\end{lm}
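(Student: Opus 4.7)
The plan is to prove both halves of the lemma by standard pictorial ``railroad'' arguments, exploiting the fact that the transfer matrix is built out of $R$-matrices whose compatibility relations (YBE, unitarity, crossing) allow strands to be slid past each other without changing the value of the diagram.

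For the interlacing relation~\eqref{eq:interlace}, I would first draw the transfer matrix $t(w;z_1,\ldots,z_L)$ as a double row of $R$-vertices, with the $w$-strand traced out horizontally and each vertical strand labelled by a $z_j$. I then insert $\Rc_i(z_i/z_{i+1})$ at the very top of the diagram, straddling the strands labelled $z_i$ and $z_{i+1}$. Viewing $\Rc_i(z_i/z_{i+1})$ as a crossing tile whose rapidities are compatible with the top row, the YBE~\eqref{eq:YBE} lets this tile pass through the pair of vertices $R(w,z_i) R(w,z_{i+1})$ at the cost of exchanging the two vertical spectral parameters in that row. Repeating the same move as the tile travels through the bottom row, where it encounters $R(z_{i+1},1/w)R(z_i,1/w)$, swaps the labels there as well. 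After these two applications of YBE, the tile emerges at the bottom of the diagram, now attached to the strands in the order $z_{i+1},z_i$, and we read off the right-hand side $t(w;z_{i+1},z_i)\,\Rc_i(z_i/z_{i+1})$.

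For the boundary relations~\eqref{eq:boundinterlace}, I would focus on the leftmost column (the rightmost case being entirely symmetric), which consists of $R(w,z_1)$ in the top row and $R(z_1,1/w)$ in the bottom row, joined by the reflecting arc on the left. Using the crossing relation~\eqref{eq:cross} on the tile $R(z_1,1/w)$ --- bending its left pair of legs up and around the arc --- converts the U-turn combination into a product of two tiles that is manifestly invariant under $z_1 \leftrightarrow 1/z_1$. Equivalently, one may expand each of the two boundary tiles over its two possible fillings (identity versus $e$-operator), and check directly from the explicit weights in \defnref{R} that the resulting four-term sum at the boundary is a rational function of $z_1$ symmetric under $z_1 \to 1/z_1$.

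The interlacing argument is almost pure bookkeeping once the YBE is drawn correctly: the only real pitfall is keeping straight the order of the tiles in the bottom row (which runs right-to-left in the definition of $t$) so that the YBE is applied in the consistent orientation. The main technical obstacle is the boundary identity, where one must verify that the specific combination of crossing at the reflecting arc and the trivial K-matrix implicit in this setup produces exactly the $z_1\to 1/z_1$ symmetry, rather than a more general reflection equation; concretely, this comes down to checking that the effect of~\eqref{eq:cross} on the $[\cdot]$-quotients defining $R(w,z_1)$ precisely matches the inversion of $z_1$ at the boundary.
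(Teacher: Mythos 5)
Your plan matches the paper's (essentially unwritten) proof: the interlacing relation~\eqref{eq:interlace} is justified there by exactly the YBE railroad argument you describe (with the caveat you already flag about the reversed order of the bottom row), and~\eqref{eq:boundinterlace} by ``considering the possible configurations of the two tiles'' at the boundary column, i.e.\ your second, direct-enumeration route. The only detail to watch in that enumeration is that the four tile combinations fall into two distinct connectivity classes rather than a single four-term sum, one combination closing a loop and hence carrying a factor $n$; invariance under $z_1\to 1/z_1$ then holds because the two diagonal weights are individually invariant while the two mixed weights are exchanged within the same connectivity class.
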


\begin{lm}
  By acting the transfer matrix on a small link from site $i$ to $i+1$ (denoted by $\varphi_i$) and setting $z_{i+1}=qz_i$ we find the relation
  \begin{equation}
    \label{eq:Trecur}
    t_L(z_i,z_{i+1}=qz_i)\circ\varphi_i = \varphi_i \circ t_{L-2}(\hat z_i,\hat z_{i+1}),
  \end{equation}
  where $\hat z$ means that $z$ is missing from the list of arguments.
\end{lm}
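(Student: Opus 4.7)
I would prove \eqref{eq:Trecur} by a graphical manipulation of the $R$-matrix diagram defining $t_L \circ \varphi_i$, exploiting the fact that at spectral-parameter ratio $q$ the $R$-matrix collapses onto a Temperley--Lieb projector. Direct evaluation from \defnref{R} gives
\[
R(q z_i, z_i) = -\frac{[q]}{[q^2]}\, e_i \,,
\]
so at $z_{i+1} = q z_i$ the auxiliary $R$-matrix $R(z_{i+1}, z_i)$ becomes a scalar multiple of $e_i$, i.e.\ precisely the cup-and-cap pair at sites $i, i+1$. This is the essential algebraic input, and it translates the intuition that the two ``fused'' columns should become transparent to the horizontal line $w$ when a small link is attached below.

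The intended sequence of moves is: (i) insert the identity $\Rc_i(z_i/z_{i+1}) \Rc_i(z_{i+1}/z_i) = \id$, guaranteed by unitarity \eqref{eq:unitarity}, between $\varphi_i$ and the bottom row of $t_L$ at columns $i, i+1$; (ii) propagate one of these $\Rc_i$ factors up through both rows of $t_L$ using the interlacing identity \eqref{eq:interlace}, so that one factor remains sandwiched between $\varphi_i$ and the transfer matrix while the other emerges above; (iii) at $z_{i+1} = q z_i$ the sandwiched $\Rc_i$ is proportional to $e_i$ (in the appropriate limit of unitarity), and contracting this against $\varphi_i$ produces a closed loop of weight $n$ while reproducing $\varphi_i$ one level up; (iv) the two columns $i, i+1$ of $t_L$ then sit ``empty'' above $\varphi_i$, with $R(w, z_i) R(w, q z_i)$ in the top row and $R(q z_i, 1/w) R(z_i, 1/w)$ in the bottom row, and a direct application of \eqref{eq:unitarity} together with the crossing relation \eqref{eq:cross} shows that these products transmit the line $w$ trivially across the two columns. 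What remains is exactly $\varphi_i \circ t_{L-2}(\hat z_i, \hat z_{i+1})$.

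The principal obstacle is the scalar bookkeeping, combined with the mild singularity that appears when evaluating $\Rc_i$ at the degenerate ratio $1/q$: the coefficient $-[q]/[q^2]$ from the fusion of $R(qz_i, z_i)$, the loop weight $n = -(q+q^{-1})$, and the pole of $\Rc_i(1/q)$ (which is matched by the vanishing of the $\id$-part of $\Rc_i(q)$ under unitarity) must all combine cleanly to $1$ in a consistent limit. A further subtlety is that the two rows of $t_L$ use $R(w, z_j)$ and $R(z_j, 1/w)$ respectively; these two conventions are related by the crossing relation \eqref{eq:cross}, so the interlacing propagation in step (ii) and the column-removal in step (iv) must be applied symmetrically in both rows, with at least one application of \eqref{eq:cross} needed to bring the bottom-row $R$-matrices into a form compatible with the YBE-based moves.
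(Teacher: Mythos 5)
Your step (iv) is, by itself, the paper's entire proof: one acts $t_L(z_i,z_{i+1}=qz_i)$ directly on the small link $\varphi_i$, uses the crossing relation \eqref{eq:cross} to bend one leg of one of the two $R$-matrices at columns $i,i+1$ of the bottom row around the cup, and observes that the resulting product of the two column-$i,i+1$ factors is precisely a unitarity pair \eqref{eq:unitarity} at the specialisation $z_{i+1}=qz_i$, so the horizontal line passes straight through and the bottom row loses all dependence on $z_i,z_{i+1}$; the same two moves applied to the top row finish the argument. In this direct route no singular limit and no scalar residue ever appear. (Your observation that $R(qz_i,z_i)\propto e_i$ is correct but plays no role here, since no $R$-matrix between the two vertical lines $i$ and $i+1$ occurs in $t_L$.)

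The scaffolding of steps (i)--(iii) is both redundant and genuinely flawed. It is redundant because it is circular: the factor you leave sandwiched against the cup is $\Rc_i(z_{i+1}/z_i)|_{z_{i+1}=qz_i}=\Rc_i(q)=-\frac{[q]}{[q^2]}\,e_i$, and since $e_i\varphi_i=n\,\varphi_i$ with $n=-(q+q^{-1})$ and $[q^2]=[q]\,(q+q^{-1})$, one finds $\Rc_i(q)\,\varphi_i=\varphi_i$ exactly. You are returned to the configuration you started from --- the two columns do not ``sit empty'' after (iii); all the work of removing them remains to be done in (iv). It is flawed because the insertion in step (i) is ill-defined at the specialised point: the companion factor is $\Rc_i(z_i/z_{i+1})|_{z_{i+1}=qz_i}=\Rc_i(1/q)$, whose coefficients carry $[q\cdot q^{-1}]=[1]=0$ in the denominator, so the ``resolution of the identity'' $\Rc_i(z_i/z_{i+1})\Rc_i(z_{i+1}/z_i)=\id$ cannot be inserted at $z_{i+1}=qz_i$. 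Making this rigorous would require keeping $z_{i+1}$ generic, tracking the argument swap $z_i\leftrightarrow z_{i+1}$ produced by the interlacing \eqref{eq:interlace} through both rows, and taking the limit at the end --- none of which you carry out, and which you yourself defer as ``the principal obstacle.'' That obstacle is an artifact of the detour: delete steps (i)--(iii), perform (iv) directly on $t_L\circ\varphi_i$ as the paper does, and it disappears.
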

\begin{proof}
  Considering first the bottom row, we use the crossing relation \eqref{eq:cross} and then the unitarity relation \eqref{eq:unitarity}:
  \[ \picw{-23pt}{70pt}{Tphibottom1}\qquad=\qquad\picw{-23pt}{70pt}{Tphibottom2}\qquad=\qquad\picw{-3pt}{70pt}{Tphibottom3}\;, \]
  and see that the bottom row no longer depends on $z_i$ and $z_{i+1}$. We repeat the procedure for the top row:
  \[ \picw{-23pt}{70pt}{Tphitop1}\qquad=\qquad\picw{-23pt}{70pt}{Tphitop2}\qquad=\qquad\picw{-4pt}{70pt}{Tphitop3}\;. \]
  and the result follows.
\end{proof}

\subsection{The dominant eigenvector}

It is possible to show (see for example \cite{DF05}) that two copies of the transfer matrix with different values of the parameter $w$ commute, and therefore that the groundstate eigenvector does not depend on $w$. Explicitly, the eigenvalue equation thus becomes
\begin{equation}
  \label{eq:TPsi}
  t(w;z_1,\ldots,z_L)\ket{\Psi(z_1,\ldots,z_L)}=\ket{\Psi(z_1,\ldots,z_L)} \,,
\end{equation}
with the ground state eigenvector given by
\[ \ket{\Psi(z_1,\ldots,z_L)}
=\sum_{\alpha\in\text{LP}_L}\psi_\alpha(z_1,\ldots,z_L)\ket\alpha. \]
From the expression of the $R$-matrix, the coefficients in the eigenvalue equation~\eqref{eq:TPsi} are all rational functions of the $z_j$'s, and hence one can normalise $\ket{\Psi(z_1,\ldots,z_L)}$ so that all the components $\psi_\alpha(z_1,\ldots,z_L)$ are Laurent polynomials in the $z_j$'s. Moreover, one requires that these components have no common factor.

With this normalisation, the interlacing relations~\eqref{eq:interlace} and \eqref{eq:boundinterlace} yield the $q$-deformed Knizhnik--Zamolodchikov equation for the ground state eigenvector, expressed in the form
\begin{align*}
  \Rc_i(z_i/z_{i+1})\ket{\Psi(z_1,\ldots,z_L)}&=\pi_i\ket{\Psi(z_1,\ldots,z_L)},\\
  \ket{\Psi(z_1,\ldots,z_L)}&=\ket{\Psi(1/z_1,\ldots,z_L)},\\
  \ket{\Psi(z_1,\ldots,z_L)}&=\ket{\Psi(z_1,\ldots,1/z_L)},
\end{align*}
where $\pi_i f(z_i,z_{i+1})=f(z_{i+1},z_i)$.

Acting on $\ket{\Psi_{L-2}(\hat z_i,\hat z_{i+1})}$ with both sides of \eqref{eq:Trecur}, we get
\[ t_L(z_i,z_{i+1}=qz_i)\ \varphi_i\ket{\Psi_{L-2}(\hat z_i,\hat z_{i+1})}
= \varphi_i\ket{\Psi_{L-2}(\hat z_i,\hat z_{i+1})}.\]
Since the ground state is unique, $\ket{\Psi_L(z_{i+1}=qz_i)}$ and $\varphi_i\ket{\Psi_{L-2}(\hat z_i,\hat z_{i+1})}$ are linearly related, and one can show that $\ket{\Psi}$ satisfies the recursion relation
\begin{equation}
  \label{eq:Psirecur}
  \ket{\Psi_L(z_{i+1}=qz_i)} =
  \varphi_i\ket{\Psi_{L-2}(\hat z_i,\hat z_{i+1})}
  \times (-1)^L \prod_{j\notin \{i,i+1\}} k(z_i,z_j) \,,
\end{equation}
with
$$ k(a,b) =[qb/a] \ [q/ab] \,. $$


\subsection{Solution for the eigenvector}
The $q$KZ equation forces certain symmetry requirements on the components of $\ket\Psi$, which lead to the solution for the ground state. For instance,
\[ \psi_{|(\cdots()\cdots)}=(-1)^{\frac{L}2(\frac{L}2+1)}\prod_{1\leq i<j\leq \frac{L+1}2}k(z_j,z_i)\prod_{\frac{L+3}2\leq i<j\leq L}k(1/z_i,z_j). \]
By considering the Dyck path representation of the link patterns, one can write the other components in terms of factorised operators acting on this component. We will not give the explicit solution here as it is not needed for our calculations. A full explanation of the procedure is in Section~4.1 of \cite{dGP07}.

\subsection{The normalisation factor $Z_L$}

The normalisation factor $Z_L$ is defined as
\[
  Z_L(z_1,\ldots,z_L) := \sum_{\alpha\in\text{LP}_L}\psi_\alpha(z_1,\ldots,z_L) \,.
\]
To express $Z_L$, we first define the completely symmetric polynomial character of the symplectic group.
\begin{defn}
The symplectic character $\chi_{\lambda}$ associated to a partition $\lambda$ is given by
\[
\chi_{\lambda}^{(L)} (u_1,\ldots ,u_L)
=\frac{\det{\left[u_i^{\mu _j}-u_i^{-\mu _j}\right]}}
{\det{\left[u_i^{\delta_j}-u_i^{-\delta_j}\right]}} \,,
\]
where $\delta_j= L-j+1$ and $\mu_j=\lambda_j+\delta_j$.
\end{defn}
Throughout this paper, we shall restrict to the partition $\lambda_j=\left\lfloor\frac{L-j}{2}\right\rfloor$, for which $\chi$ has the special recursion
\begin{equation}
  \label{eq:chirecur}
  \chi_{\lambda}^{(L)} (u_1^2,\ldots ,u_L^2)|_{u_i=qu_j}
  =(-1)^L \prod_{\ell\neq i,j}k(u_j,u_\ell)
  \ \chi_{\lambda}^{(L-2)} (\ldots,\hat u_i^2,\ldots,\hat u_j^2,\ldots).
\end{equation}
We will use the shorthand notation $\chi_L(\ldots) := \chi_{\lambda}^{(L)}(\ldots)$, with the particular choice of $\lambda$ given above.

\begin{prop}
The normalisation $Z_L$ is given by
\[ Z_L=\chi_L(z_1^2,\ldots,z_L^2). \]
\end{prop}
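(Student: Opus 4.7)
The plan is to show that $Z_L$ and $\chi_L(z_1^2,\ldots,z_L^2)$ lie in the same class of Laurent polynomials, satisfy the same recursion at $z_{i+1}=qz_i$, and share the same endpoint invariances and base case, and then conclude by uniqueness.

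The main input is a recursion for $Z_L$. I would sum the $\ket{\Psi_L}$-recursion~\eqref{eq:Psirecur} over link patterns, using the observation that $\varphi_i$ sends each $\ket\beta\in{\rm LP}_{L-2}$ to exactly one pattern of ${\rm LP}_L$ (the one with a small arch adjoined at positions $i,i+1$) without producing any closed loops; hence $\sum_{\alpha\in{\rm LP}_L}(\varphi_i\ket\beta)_\alpha=1$ for every $\beta$. Summing therefore gives
\[
Z_L(z_1,\ldots,z_L)\big|_{z_{i+1}=qz_i}=(-1)^L\prod_{j\neq i,i+1}k(z_i,z_j)\,Z_{L-2}(\hat z_i,\hat z_{i+1}),
\]
which matches literally the character recursion~\eqref{eq:chirecur} for $\chi_L(z_1^2,\ldots,z_L^2)$ under the identification $u_k=z_k$. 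Next, from the explicit $q$KZ form of the components described in the excerpt, each $\psi_\alpha$ is a centered Laurent polynomial in every $z_j^2$ of degree at most $(L-1)/2$ in each direction, so the same holds for $Z_L$ and $\chi_L$. The endpoint pieces of the $q$KZ equation pass directly to $Z_L(1/z_1,z_2,\ldots)=Z_L(z_1,\ldots)=Z_L(\ldots,1/z_L)$. Granted full symmetry in $(z_1^2,\ldots,z_L^2)$, these endpoint invariances extend to invariance under $z_j\to 1/z_j$ for every $j$, and the consecutive recursion lifts to specializations at every $z_i=qz_j$. A Lagrange-type interpolation in one variable at a time (using the degree bound and the $L-1$ specializations, doubled by $z_j\leftrightarrow 1/z_j$) then uniquely determines $Z_L$ in terms of $Z_{L-2}$; since $\chi_L$ satisfies the same characterization and matches the base case $Z_1=1=\chi_1$, induction on $L$ yields $Z_L=\chi_L$.

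The main obstacle is the full symmetry of $Z_L$ in the $z_j^2$. Summing the interlacing relation~\eqref{eq:interlace} over components only yields
\[
(\pi_i-\id)\,Z_L=\frac{[z_i/z_{i+1}]}{[qz_i/z_{i+1}]}\Bigl(n\,Z_L-\sum_{\alpha\in{\rm LP}_L}(e_i\ket\Psi)_\alpha\Bigr),
\]
and the right-hand side does not vanish automatically for generic $n$, since the flat covector $\sum_\alpha\bra\alpha$ is not a left eigenvector of $e_i$ (for patterns with $(i,i+1)$ connected the action of $e_i$ produces an extra factor of $n$, while for the remaining patterns it does not). The way around this is to interweave symmetry with the induction above: assuming $Z_{L-2}=\chi_{L-2}$ is already symmetric and fully endpoint-invariant, the recursion forces $Z_L|_{z_{i+1}=qz_i}$ to be symmetric in the remaining variables, and the combined degree bound and endpoint invariances then force $Z_L$ itself to be symmetric under $z_i\leftrightarrow z_{i+1}$, closing the induction simultaneously on symmetry and on the identification $Z_L=\chi_L$.
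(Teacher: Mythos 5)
Your overall route --- the recursion at $z_{i+1}=qz_i$ obtained by summing \eqref{eq:Psirecur} over components, the degree count, the one-variable-at-a-time interpolation against $2(L-1)$ specialisation points, and the base case $L=1$ --- is exactly the paper's proof, and those parts are correct. The difficulty is indeed concentrated where you put it, in the symmetry of $Z_L$, and your computation is right: one finds
\[
(\pi_i-\id)\,Z_L
=\frac{[z_i/z_{i+1}]}{[q z_i/z_{i+1}]}\Bigl(n\,Z_L-\sum_{\alpha}(e_i\ket{\Psi})_\alpha\Bigr)
=\frac{[z_i/z_{i+1}]}{[q z_i/z_{i+1}]}\,(n-1)\sum_{\beta\,:\,(i,i+1)\notin\beta}\psi_\beta \,,
\]
where the last sum runs over patterns $\beta$ with no little arch joining $i$ and $i+1$. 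So for generic $n$ the component sum is \emph{not} symmetric (already $Z_3=(q^2+q^{-2})(z_2^2+z_2^{-2})-z_1^2-z_1^{-2}-z_3^2-z_3^{-2}$ shows this), and the proposition is genuinely a statement at the combinatorial point. What you missed is that this is precisely the regime in which the proposition is used: for percolation $n=1$, and then your own right-hand side vanishes identically. Equivalently, at $n=1$ the operator $e_i$ sends every basis pattern to a single basis pattern with coefficient $1$, so the component sum is preserved by $e_i$; and since $q^2+q+1=0$ gives $[q/w]-[w]=[qw]$, the component sum is also preserved by $\Rc_i(w)$, whence $\pi_i Z_L=Z_L$ in one line. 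You should state this (and state explicitly that $n=1$ is where it enters) and then proceed exactly as you planned; this is what the paper's ``easily proven using the $q$KZ equation'' amounts to.

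The workaround you propose instead does not close the gap. From the consecutive recursion alone you know $Z_L$ only at the two points $z_{i+1}^2=(qz_i)^{\pm2}$ for an interior pair (you may not invoke $z_{i+1}\to 1/z_{i+1}$ there, since that invariance is available only at the two ends of the strip before symmetry is known), whereas the antisymmetric part $(\pi_i-\id)Z_L$ is a centred Laurent polynomial with $L$ unknown coefficients in $z_{i+1}^2$; two conditions cannot force it to vanish once $L\geq 3$. For the same reason, without symmetry you cannot reach the $2(L-1)$ interpolation points $z_L^2=(qz_k)^{\pm2}$, $k<L$, that your Lagrange step requires --- the consecutive recursion supplies only the two with $k=L-1$. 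So the induction cannot be closed ``simultaneously on symmetry and on $Z_L=\chi_L$'' as described; the symmetry must be established first, and at $n=1$ it is immediate.
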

\begin{proof}
The recursive property \eqref{eq:Psirecur} of the ground state eigenvector is easily extended to its components, and thus to the normalisation,
\begin{equation}
  \label{eq:Zrecur}
  Z_L(z_{i+1}=qz_i)=Z_{L-2}(\hat z_i,\hat z_{i+1})
  \times (-1)^L \prod_{\ell\neq i,i+1}k(z_i,z_\ell).
\end{equation}
As $Z_L$ is a symmetric function (easily proven using \eqref{eq:unitarity} and the $q$KZ equation), this can be generalised to
\[
  Z_L(z_j=qz_i)=Z_{L-2}(\hat z_i,\hat z_j) \times (-1)^L \prod_{\ell\neq i,j}k(z_i,z_\ell).
\]
The symplectic character $\chi_L(z_1^2,\ldots,z_L^2)$ also satisfies these recursions, and it is straightforward to show that these recursions are enough to satisfy the degree of $Z_L$, which is set by solving the $q$KZ equation. It remains to show that the statement is true for a small system size, which is done by observing that for $L=1$ both the left and the right hand side must be $1$.\symbolfootnote{Note that this proof is only valid for odd $L$; for even $L$ we must prove the statement for $L=2$ as well. We omit this part of the proof as we are only interested in odd system sizes.}
\end{proof}

\section{Boundary passage probabilities}
\label{sec:Pb}

When $L$ is odd, all link patterns have an unpaired odd site. In the lattice this site belongs to an open path extending from $-\infty$ to $\infty$. In this section we will calculate two probabilities: $P_b$, the probability that this infinite path passes through the first site at a given vertical position (\figref{path}a); and $\wh P_b$, the probability that this loop passes through the left boundary at a given vertical position (\figref{path}b).

\subsection{Definitions}

We first define $\bra{\Psi}$ to be the ground state eigenvector of the rotated system, given by
\[ \bra{\Psi}=\sum_{\alpha \in \text{LP}_L}\bar\psi_{\bar\alpha}\bra{\bar\alpha}, \]
and related to $\ket\Psi$ by
\[ \bar\psi_{\bar\alpha}(z_1,\ldots,z_L)=\psi_\alpha(z_L,\ldots,z_1), \]
where $\alpha$ and $\bar\alpha$ are related by a rotation of $\pi$.

\begin{defn}
The first site passage probability is given by
\begin{equation}
\label{eq:P1}
P_b^{(L)}=\frac{\bra\Psi\rho\ket\Psi}{\bra\Psi\Psi\rangle},
\end{equation}
where $\rho$ acts between a link pattern and a rotated link pattern,
\[ \bra\beta\rho\ket\alpha, \]
giving 1 if the open path formed by these two link patterns goes through the first site, and 0 if it does not. For example, the following configuration gives a weight of 1 ($\rho$ is depicted as two dots marking the first site):
\[ \pic{0pt}{30pt}{betalpha} \]
\end{defn}

\begin{defn}
The boundary passage probability $\wh P^{(L)}$ is given by
\begin{equation}
\label{eq:P0}
\wh P_b^{(L)}=\frac{\bra\Psi\wh\rho\ket\Psi}{\bra\Psi\Psi\rangle},
\end{equation}
where $\wh\rho$ marks out the left boundary loop in the transfer matrix. It is depicted as
\[ \pic{0pt}{70pt}{kappatilde} \]
and like $\rho$, acts between and upward and a downward link pattern, multiplying each term in the transfer matrix by 1 if the infinite loop passes through the left boundary loop, and 0 if it does not. For example, the following configuration is multiplied by 1:
\[ \pic{0pt}{85pt}{betakappalpha} \]
\end{defn}

As $\bra\beta\alpha\rangle=1$, $\forall \alpha,\beta$, the denominator of $P_b^{(L)}$ and $\wh P_b^{(L)}$ becomes
\[
\bra\Psi\Psi\rangle
= \sum_{\alpha,\beta} \bar\psi_{\bar\beta} \psi_\alpha \bra\beta\alpha\rangle
= Z_L(z_L, \dots, z_1) \ Z_L(z_1, \dots, z_L)
= [Z_L(z_1, \dots, z_L)]^2 \,.
\]

\subsubsection*{Example: $L=3$}
For $L=3$ there are two link patterns, $|()$ and $()|$. Solving the $q$KZ equation for the components of the eigenvector gives
\begin{align*}
\psi_{|()}(z_1,z_2,z_3)&=k(z_2,z_1),\\
\psi_{()|}(z_1,z_2,z_3)&=\frac{[qz_2/z_3]}{[z_2/z_3]}\left(1-\pi_2\right)\psi_{|()}\\
&=k(1/z_2,z_3),
\end{align*}
and $Z_3=\psi_{|()}+\psi_{()|}$ is simply $\chi_3(z_1^2,z_2^2,z_3^2)$.

The only combination of upward and downward link patterns which does not contribute to $P_b^{(3)}$ is $\alpha=\bar\beta=()|$. We thus find the first site passage probability by brute force as
\begin{align*}
  P_b^{(3)}&= \frac1{Z_3^2} \left(
  \psi_{|()}\bar\psi_{|()}+\psi_{()|}\bar\psi_{|()}+\psi_{|()}\bar\psi_{()|}
  \right) \\
  &= \frac1{Z_3^2} \left( 
  \psi_{|()}(z_1,z_2,z_3)\psi_{()|}(z_3,z_2,z_1)+\psi_{()|}(z_1,z_2,z_3)\psi_{()|}(z_3,z_2,z_1)
  \right. \\
  &\qquad \qquad \left.+\psi_{|()}(z_1,z_2,z_3)\psi_{|()}(z_3,z_2,z_1)
  \right) \\
  &=\frac1{Z_3^2} \left(
  5+z_1^4+z_1^{-4}+2(z_1^2z_2^2+z_1^2z_2^{-2}+z_1^{-2}z_2^2+z_1^{-2}z_2^{-2}+z_1^2z_3^2
  \right.\\
  &\left.\qquad +z_1^2z_3^{-2}+z_1^{-2}z_3^2+z_1^{-2}z_3^{-2})
  +z_2^2z_3^2+z_2^2z_3^{-2}+z_2^{-2}z_3^2+z_2^{-2}z_3^{-2}\right).
\end{align*}


\subsection{Symmetries}

\begin{prop}
  \label{prop:P1symm}
  $P_b^{(L)}$ is symmetric in $z_2,\ldots,z_L$ and invariant under $z_i\rightarrow 1/z_i$ for $i\geq 2$.
\end{prop}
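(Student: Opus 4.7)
The plan is to reduce both claimed invariances to manipulations of the numerator $N(\mathbf{z}) := \bra{\Psi(\mathbf{z})}\rho\ket{\Psi(\mathbf{z})}$, since the denominator $Z_L(\mathbf{z})^2 = \langle\Psi|\Psi\rangle$ is already known (from the preceding proposition) to be fully symmetric in $\mathbf{z}$ and invariant under every $z_j \to 1/z_j$.

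For symmetry in $z_i \leftrightarrow z_{i+1}$ with $i \geq 2$, the key observation is that $\rho$ is supported on site $1$ only (its depiction is literally a pair of dots marking the first site), whereas $\Rc_i$ is built from $\id$ and $e_i$, both of which act trivially outside sites $i,i+1$. Hence $[\Rc_i(w),\rho]=0$ for all $w$ when $i\geq 2$. Using the qKZ relation $\Rc_i(z_i/z_{i+1})\ket{\Psi(\mathbf{z})}=\ket{\Psi(\pi_i\mathbf{z})}$ together with its counterpart for the bra — which follows from $\bar\psi_{\bar\alpha}(\mathbf{z})=\psi_\alpha(\mathbf{z}^R)$, the self-adjointness of $e_i$ under the loop pairing, and hence reads $\bra{\Psi(\pi_i\mathbf{z})}=\bra{\Psi(\mathbf{z})}\Rc_i(z_{i+1}/z_i)$ — I would compute
\[
  N(\pi_i\mathbf{z}) = \bra{\Psi(\mathbf{z})}\Rc_i(z_{i+1}/z_i)\,\rho\,\Rc_i(z_i/z_{i+1})\ket{\Psi(\mathbf{z})}
  = \bra{\Psi(\mathbf{z})}\,\rho\,\Rc_i(z_{i+1}/z_i)\Rc_i(z_i/z_{i+1})\ket{\Psi(\mathbf{z})} = N(\mathbf{z}),
\]
where the middle equality uses the commutation and the final equality uses unitarity \eqref{eq:unitarity}.

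For the inversion invariance, the simplest case is $z_L\to 1/z_L$: the third qKZ equation $\ket{\Psi(\ldots,z_L)}=\ket{\Psi(\ldots,1/z_L)}$, combined with the analogous statement $\bar\psi_{\bar\alpha}(\ldots,1/z_L)=\psi_\alpha(1/z_L,z_{L-1},\ldots,z_1)=\psi_\alpha(z_L,z_{L-1},\ldots,z_1)=\bar\psi_{\bar\alpha}(\ldots,z_L)$ (using qKZ symmetry under $z_1\to 1/z_1$ for $\ket{\Psi}$), makes $N$ manifestly invariant under $z_L\to 1/z_L$. To promote this to arbitrary $i\geq 2$, I would simply conjugate by the $S_{L-1}$-symmetry just proved: transpositions of $z_i$ with $z_{i+1},\ldots,z_L$ move $z_i$ into the last slot, $z_L\to 1/z_L$ flips it, and a sequence of transpositions returns it to position $i$.

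The mildly nontrivial step is verifying the bra version of the qKZ relation used above, i.e.\ making precise how the rotation $\alpha\mapsto\bar\alpha$ and the reversal $\mathbf{z}\mapsto\mathbf{z}^R$ compose so that a swap of $z_i,z_{i+1}$ for the bra is implemented by $\Rc_i$ acting from the right with argument $z_{i+1}/z_i$ rather than $z_i/z_{i+1}$. I expect this to be a short bookkeeping computation that uses only the definition $\bar\psi_{\bar\alpha}(z_1,\ldots,z_L)=\psi_\alpha(z_L,\ldots,z_1)$ and the stated qKZ equation applied at the appropriate position $L-i$ of the reversed argument list; everything else in the proof is then automatic.
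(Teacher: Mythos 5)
Your proposal is correct and follows essentially the same route as the paper: both rely on $[\Rc_i,\rho]=0$ for $i\geq 2$, insert the unitarity identity $\Rc_i(z_{i+1}/z_i)\Rc_i(z_i/z_{i+1})=\id$, apply the $q$KZ exchange relation to the ket and its rotated counterpart for the bra, and then obtain the inversion invariance from the boundary $q$KZ equation $\ket{\Psi(\ldots,1/z_L)}=\ket{\Psi(\ldots,z_L)}$ combined with the symmetry just established. The only difference is that you spell out the bra-side bookkeeping ($\bar\psi_{\bar\alpha}(\mathbf{z})=\psi_\alpha(\mathbf{z}^R)$ and the position shift $i\mapsto L-i$ under rotation), which the paper leaves implicit.
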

\begin{proof}
  This proof is similar to the proof that $Z_L$ is symmetric, and uses the $q$KZ equation. Remembering that an $\Rc$-matrix acting between sites $i\neq 1$ and $i+1$ commutes with $\rho$, we insert the identity $\Rc_i(z_{i+1}/z_i)\Rc_i(z_i/z_{i+1})$ into
  the definition for $P_b^{(L)}$:
  \begin{align*}
    P_b^{(L)}(\dots z_i,z_{i+1} \dots)
    &= \frac{\bra{\Psi_L(z_i,z_{i+1})}\Rc_i(z_{i+1}/z_i)\ \rho\ \Rc_i(z_i/z_{i+1})\ket{\Psi_L(z_i,z_{i+1})}}
    {Z_L(z_i,z_{i+1})^2} \\
    &= \frac{\bra{\Psi_L(z_{i+1},z_i)}\rho\ket{\Psi_L(z_{i+1},z_i)}}
    {Z_L(z_{i+1},z_i)^2} \\
    &= P_1^{(L)}(\dots z_{i+1},z_i \dots) \,.
  \end{align*}
  The invariance of $P_b^{(L)}$ under $z_i\rightarrow 1/z_i$ simply follows from the invariance of $\ket{\Psi_L}$ and $\bra{\Psi_L}$ under $z_L\rightarrow 1/z_L$, as well as the above symmetry.
\end{proof}

\begin{prop}
  \label{prop:P0symm}
  $\wh P_b^{(L)}$ is symmetric in all the $z_i$'s and invariant under $z_i\rightarrow 1/z_i$, $\forall i \in \{ 1, \dots, L \}$.
\end{prop}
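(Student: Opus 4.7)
The plan is to mimic the proof of Proposition~\ref{prop:P1symm}, with the crucial upgrade that $\wh\rho$ --- unlike $\rho$ --- is not localized at a single tile, but contains a full row of $R$-matrices (with horizontal parameter $w$) together with a topological marker tracking the leftmost boundary loop. As a result, $\wh\rho$ should inherit the full set of Yang--Baxter symmetries of the transfer matrix $t$, and in particular an interlacing relation at \emph{every} site $i$, including $i=1$. This is what will upgrade the partial symmetry of $P_b^{(L)}$ to a full symmetry of $\wh P_b^{(L)}$ in all the $z_j$'s.

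The first step is to establish, by the same pictorial manipulations as for $t$, the analog of \eqref{eq:interlace} for $\wh\rho$, namely
\[
\Rc_i(z_i/z_{i+1}) \ \wh\rho(w;\ldots,z_i,z_{i+1},\ldots)
= \wh\rho(w;\ldots,z_{i+1},z_i,\ldots) \ \Rc_i(z_i/z_{i+1}),
\qquad 1\le i\le L-1,
\]
together with the boundary analogs of \eqref{eq:boundinterlace},
\[
\wh\rho(w;z_1,z_2,\ldots) = \wh\rho(w;1/z_1,z_2,\ldots), \qquad
\wh\rho(w;\ldots,z_{L-1},z_L) = \wh\rho(w;\ldots,z_{L-1},1/z_L).
\]
Both are obtained from the YBE~\eqref{eq:YBE} and the boundary argument leading to~\eqref{eq:boundinterlace}, applied to the $R$-matrices sitting inside $\wh\rho$, once one checks that the local rearrangements leave the boundary-loop marker invariant.

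Assuming these relations, the symmetry of $\wh P_b^{(L)}$ under any adjacent exchange $z_i\leftrightarrow z_{i+1}$ follows by the three-line calculation already used in Proposition~\ref{prop:P1symm}: insert $\Rc_i(z_{i+1}/z_i)\Rc_i(z_i/z_{i+1})=\id$ (unitarity~\eqref{eq:unitarity}) into $\bra{\Psi_L}\wh\rho\ket{\Psi_L}$, push the outer $\Rc_i$'s through $\wh\rho$ using the new interlacing to swap $z_i\leftrightarrow z_{i+1}$ inside $\wh\rho$, and absorb the remaining factors into $\ket{\Psi_L}$ and $\bra{\Psi_L}$ via the $q$KZ equation. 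Iterating over all adjacent pairs gives the full symmetric-group action on $(z_1,\ldots,z_L)$. The invariance under $z_1\to 1/z_1$ and $z_L\to 1/z_L$ then follows from the matching $q$KZ boundary relations for $\ket{\Psi_L}$ and $\bra{\Psi_L}$, combined with the boundary invariance of $\wh\rho$ established above; the invariance under $z_i\to 1/z_i$ for intermediate $i$ is obtained by conjugating with the symmetric-group action.

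The main obstacle, and the only genuinely new step relative to Proposition~\ref{prop:P1symm}, is the pictorial YBE argument for the interlacing relation at $i=1$, i.e.\ for the tile adjacent to the boundary marker. Away from the boundary ($i\ge 2$) the argument is identical to the one for $t$. At $i=1$ one must check that sliding $\Rc_1$ past the first two columns of $\wh\rho$ does not change which loop is identified as the left boundary loop --- a topological statement, since the marker simply follows the boundary loop through the local rearrangement, but one that should be spelled out explicitly in the figure rather than taken for granted.
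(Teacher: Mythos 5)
Your proposal is correct and follows essentially the same route as the paper: the paper's proof simply states that the argument of Proposition~\ref{prop:P1symm} carries over because $\wh\rho$ satisfies the same interlacing relation \eqref{eq:interlace} as the transfer matrix, which is exactly the key fact you establish and exploit (including the boundary analogs needed for the $z_i\to 1/z_i$ invariance). Your extra care about the topological marker surviving the Yang--Baxter moves at $i=1$ is a detail the paper leaves implicit, but it is the same underlying argument.
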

\begin{proof}
  The proof is similar to the previous proof, however it uses the fact that the interlacing condition \eqref{eq:interlace} for the transfer matrix is also satisfied by $\wh\rho$.
\end{proof}

\subsection{Recursions}

\begin{prop}
  \label{prop:P1rec}
  $P_b^{(L)}$ satisfies the following recursions:
  \begin{equation}
    \label{eq:P1rec}
    P_b^{(L)}|_{z_L^2=(qz_k)^{\pm 2}}=P_b^{(L-2)}(\hat z_k,\hat z_L),\qquad 1<k<L.
  \end{equation}
\end{prop}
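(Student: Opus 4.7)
The plan is to reduce the general specialisation $z_L^2=(qz_k)^{\pm2}$ to an adjacent one, and then apply the ground state recursion \eqref{eq:Psirecur} together with its $Z_L$ counterpart \eqref{eq:Zrecur}. By \propref{P1symm} the function $P_b^{(L)}$ is invariant under $z_L\to 1/z_L$, so the two sign choices in $(qz_k)^{\pm2}$ give the same value; it is also symmetric in $z_2,\ldots,z_L$, so after relabelling $z_k$ into position $L-1$ it suffices to prove
\[
P_b^{(L)}\big|_{z_L=qz_{L-1}}=P_b^{(L-2)}(z_1,\ldots,z_{L-2}).
\]

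At this adjacent specialisation, \eqref{eq:Psirecur} with $i=L-1$ factorises the ket as
\[
\ket{\Psi_L}=C\,\varphi_{L-1}\ket{\Psi_{L-2}},\qquad
C:=(-1)^L\prod_{j=1}^{L-2}k(z_{L-1},z_j),
\]
and \eqref{eq:Zrecur} gives $Z_L=C\,Z_{L-2}$ with exactly the same prefactor. An analogous factorisation of the bra, $\bra{\Psi_L}=C\,\bra{\Psi_{L-2}}\varphi_{L-1}^{\dagger}$ (where $\varphi_{L-1}^{\dagger}$ is the dual insertion of a small link between sites $L-1$ and $L$), is obtained from the identity $\bar\psi_{\bar\alpha}(z_1,\ldots,z_L)=\psi_\alpha(z_L,\ldots,z_1)$: after reversing the list, the specialisation $z_L=qz_{L-1}$ becomes the boundary case $y_1=qy_2$ of the same recursion.

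Substituting into the definition \eqref{eq:P1} of $P_b^{(L)}$, the factor $C^2$ is common to the numerator $\bra{\Psi_L}\rho\ket{\Psi_L}$ and the denominator $Z_L^2$ and cancels, leaving
\[
P_b^{(L)}\big|_{z_L=qz_{L-1}}=
\frac{\bra{\Psi_{L-2}}\varphi_{L-1}^{\dagger}\rho\,\varphi_{L-1}\ket{\Psi_{L-2}}}{\bra{\Psi_{L-2}}\varphi_{L-1}^{\dagger}\varphi_{L-1}\ket{\Psi_{L-2}}}.
\]
Since $\rho$ acts only on site $1$ and $1<L-1$, it commutes with $\varphi_{L-1}$, so $\varphi_{L-1}^{\dagger}\rho\,\varphi_{L-1}=\rho\,\varphi_{L-1}^{\dagger}\varphi_{L-1}$. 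Inside a link-pattern inner product, $\varphi_{L-1}^{\dagger}\varphi_{L-1}$ pinches sites $L-1,L$ together and contributes one closed loop of weight $n$; this weight is identical on both sides of the ratio and cancels, yielding $\bra{\Psi_{L-2}}\rho\ket{\Psi_{L-2}}/\bra{\Psi_{L-2}}\Psi_{L-2}\rangle=P_b^{(L-2)}$, as required.

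The main obstacle is the bra-side bookkeeping. The recursion \eqref{eq:Psirecur} is stated only for the ket, so the matching factorisation of $\bra{\Psi_L}$ must be derived from the reversal identity; one must verify that the prefactor is really $C$ (rather than, say, a $q\leftrightarrow q^{-1}$ variant) and that the insertion-then-projection $\varphi_{L-1}^{\dagger}\varphi_{L-1}$ produces exactly one closed loop of weight $n$ regardless of the link patterns present, so that the prefactor cancellation between numerator and denominator is clean. Everything else follows directly from the $q$KZ and normalisation recursions already in hand.
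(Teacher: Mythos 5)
Your overall strategy coincides with the paper's: reduce to the adjacent specialisation $z_L=qz_{L-1}$ via \propref{P1symm}, factorise the ket with \eqref{eq:Psirecur} and the denominator with \eqref{eq:Zrecur}, commute $\rho$ past $\varphi_{L-1}$, and cancel the common prefactor. However, the step you yourself flag as ``the main obstacle'' --- the bra-side factorisation $\bra{\Psi_L}=C\,\bra{\Psi_{L-2}}\varphi_{L-1}^{\dagger}$ --- is a genuine gap, and the route you sketch for it does not work as stated. After applying the reversal identity $\bar\psi_{\bar\alpha}(z_1,\ldots,z_L)=\psi_\alpha(z_L,\ldots,z_1)$, the specialisation $z_L=qz_{L-1}$ becomes $y_1=qy_2$ in the reversed list, i.e.\ $y_2=q^{-1}y_1$. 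The recursion \eqref{eq:Psirecur} applies only at $y_{i+1}=qy_i$; since the components are functions of the $y_i^2$ and $q^{2}\neq q^{-2}$ (even at $q=e^{2i\pi/3}$), this is not the same point, nor can it be reached by the $y_1\to 1/y_1$ invariance. So ``the same recursion'' is simply not available there, and the prefactor you call $C$ cannot be read off from \eqref{eq:Psirecur}; your own worry about a $q\leftrightarrow q^{-1}$ variant is precisely the issue.

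The paper closes exactly this gap with a separate $q$KZ computation. Since $\rho$ commutes with $\varphi_{L-1}$, what is needed is $\bra{\Psi_L}\varphi_{L-1}$, which is the $\pi$-rotation of $\varphi_1^{\dagger}\ket{\Psi_L(z_L,\ldots,z_1)}$. Writing $\varphi_1\varphi_1^{\dagger}=e_1$ and using the exchange relation $\Rc_1\ket{\Psi}=\pi_1\ket{\Psi}$ at the point where $\Rc_1$ degenerates to a multiple of $e_1$, one evaluates $e_1\ket{\Psi_L(z_L,\ldots,z_1)}$ at the specialisation and then strips off the injective map $\varphi_1$, obtaining
$\varphi_1^{\dagger}\ket{\Psi_L(z_L,\ldots,z_1)}=\prod_{i=1}^{L-2}k(z_{L-1},z_i)\,\ket{\Psi_{L-2}(z_{L-2},\ldots,z_1)}$.
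This is the verification you deferred; without it (or an independently proven mirror recursion for the bra) the proof is incomplete. The remaining steps of your argument --- the cancellation of the squared prefactor and the single closed loop of weight $n=1$ produced by $\varphi_{L-1}^{\dagger}\varphi_{L-1}$ --- are correct in the percolation case $n=1$ to which this section is restricted.
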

\begin{proof}
  The denominator of $P_b^{(L)}$ has the recursion \eqref{eq:Zrecur}
  \[ Z_L(z_L^2=q^2z_{L-1}^2)^2=\prod_{i=1}^{L-2}k(z_{L-1},z_i)^2\ Z_{L-2}(z_1,\ldots,z_{L-2})^2, \]
  and we will show that the numerator has the same recursion factor.

  From the recursion of the right eigenvector we have
  $$
  \bra{\Psi_L}\rho\ket{\Psi_L}|_{z_L^2=(qz_{L-1})^2}=
  \prod_{i=1}^{L-2}k(z_{L-1},z_i)\bra{\Psi_L(z_L^2=q^2z_{L-1}^2)}\rho
  \ \varphi_{L-1}\ket{\Psi_{L-2}(z_1,\ldots,z_{L-2})},
  $$
  and since $\rho$ commutes with $\varphi_{L-1}$, we can consider $\bra{\Psi_L}\varphi_{L-1}$, which is the $\pi$ rotation of the vector $\varphi_1^\dag\ket{\Psi_L(z_L,\ldots,z_1)}$. Here, $\varphi_1^\dag$ is the bottom half of the TL operator $e_1$, sending a link pattern of size $L$ to one of size $L-2$. We can thus obtain our desired result by calulating $e_1\ket{\Psi_L(z_L,\ldots,z_1)}$ and removing the resulting link from site 1 to site 2. Here the $q$KZ equation comes in useful, as
  \begin{align*}
    &\varphi_1 \varphi_1^\dag \ket{\Psi_L(z_L,\ldots,z_1)}|_{z_L^2=(qz_{L-1})^2} \\
    &\qquad\qquad = e_1\ket{\Psi_L(z_L,\ldots,z_1)}|_{z_L^2=(qz_{L-1})^2} \\
    &\qquad\qquad = -\left.\left(\frac{[z_{L-1}/qz_L]}{[z_{L-1}/z_L]}\pi_{L-1}
    +\frac{[z_L/qz_{L-1}]}{[z_L/z_{L-1}]}\right)\ket{\Psi_L(z_L,\ldots,z_1)}\right|_{z_L^2=(qz_{L-1})^2}
    \\
    &\qquad\qquad = -\frac{[q]}{[1/q]}\ket{\Psi_L(z_{L-1},qz_{L-1},\ldots,z_1)} \\
    &\qquad\qquad = \varphi_1\ket{\Psi_{L-2}(z_{L-2},\ldots,z_1)}
    \times \prod_{i=1}^{L-2}k(z_{L-1},z_i).
  \end{align*}
  Therefore, we have
  \[
  \varphi_1^\dag\ket{\Psi_L(z_L,\ldots,z_1)}|_{z_L^2=(qz_{L-1})^2}=\ket{\Psi_{L-2}(z_{L-2},\ldots,z_1)}
  \times \prod_{i=1}^{L-2}k(z_{L-1},z_i),
  \]
  which is the $\pi$ rotation of $\bra{\Psi_{L-2}} \times \prod_{i=1}^{L-2}k(z_{L-1},z_i)$, and thus,
  \[
  \bra{\Psi_L}\rho\ket{\Psi_L}|_{z_L^2=(qz_{L-1})^2}
  =\prod_{i=1}^{L-2}k(z_{L-1},z_i)^2\bra{\Psi_{L-2}}\rho_1\ket{\Psi_{L-2}}.
  \]
  It follows that $P_b^{(L)}|_{z_L^2=(qz_{L-1})^2}=P_b^{(L-2)}$.
  
  The other relations in \eqref{eq:P1rec} follow from the invariance of $P_b$ under $z_i\leftrightarrow z_j$ and under $z_i\rightarrow 1/z_i$, for $i,j\neq 1$.
\end{proof}

\begin{prop}
\label{prop:P0rec}
$\wh P_b^{(L)}$ satisfies the following recursions:
\begin{equation}
\label{eq:P0rec}
\wh P_b^{(L)}|_{z_L^2=(qz_k)^{\pm 2}}=\wh P_b^{(L-2)}(\hat z_k,\hat z_L),\qquad 1\leq k<L.
\end{equation}
\end{prop}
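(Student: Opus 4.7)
The plan is to run the proof of Proposition~\ref{prop:P1rec} essentially verbatim, with the only substantive change arising from $\wh{\rho}$ living at the left boundary rather than at the first site. Because $\wh P_b^{(L)}$ enjoys the stronger symmetry of Proposition~\ref{prop:P0symm} --- symmetry in \emph{all} of $z_1,\ldots,z_L$ and invariance under each $z_i\to 1/z_i$ --- I would first reduce the claim to the single case $k=L-1$ with the $+$ sign, i.e.~$z_L=qz_{L-1}$. Every other value of $k\in\{1,\ldots,L-1\}$ then follows by permuting spectral parameters, and the $-$ sign case follows from the $z_{L-1}\to 1/z_{L-1}$ invariance.

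For this remaining case, the task is to show that the numerator $\bra{\Psi_L}\wh{\rho}\ket{\Psi_L}$ acquires the same factor $\prod_{i=1}^{L-2}k(z_{L-1},z_i)^2$ as the denominator $[Z_L]^2$, the latter being given for free by~\eqref{eq:Zrecur}. For the numerator, I would apply the eigenvector recursion~\eqref{eq:Psirecur} to $\ket{\Psi_L}$ and its $\pi$-rotated version to $\bra{\Psi_L}$, producing small links $\varphi_{L-1}$ on the right side of $\ket{\Psi_L}$ and $\varphi_{L-1}^{\dagger}$ on the left side of $\bra{\Psi_L}$, each accompanied by a factor $\prod_{i=1}^{L-2}k(z_{L-1},z_i)$, with $\ket{\Psi_{L-2}(z_1,\ldots,z_{L-2})}$ in between. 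To conclude I need to slide $\varphi_{L-1}^{\dagger}$ past $\wh{\rho}$ so it meets $\varphi_{L-1}$ on the other side, contracting into the inner product $\bra{\Psi_{L-2}}\wh{\rho}\ket{\Psi_{L-2}}$.

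The single nontrivial pictorial ingredient, replacing the commutativity of $\rho$ with $\varphi_{L-1}$ used in Proposition~\ref{prop:P1rec}, is the commutativity $[\wh{\rho},\varphi_{L-1}]=0$; this is immediate since $\wh{\rho}$ is supported entirely at the left boundary of the transfer matrix while $\varphi_{L-1}$ is supported at the rightmost two sites, so the two operators act on disjoint regions of the diagram. The main (mild) obstacle I anticipate is the bookkeeping check that, after pinching sites $L-1,L$, the residual $\wh{\rho}$ on the $(L-2)$-site lattice is exactly the boundary marker defining $\wh P_b^{(L-2)}$; this is clear from the picture because the pinching happens at the far right and leaves the left-boundary portion of $\wh{\rho}$ untouched. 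Combining numerator and denominator factorisations then yields $\wh P_b^{(L)}|_{z_L=qz_{L-1}}=\wh P_b^{(L-2)}(z_1,\ldots,z_{L-2})$, and the full statement follows by symmetry as described above.
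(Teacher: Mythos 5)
Your overall strategy (reduce to $k=L-1$, $z_L=qz_{L-1}$ by the full symmetry of Proposition~\ref{prop:P0symm}, then factor numerator and denominator) matches the paper's, but the one ingredient you single out as ``immediate'' is precisely where the proof has real content, and your justification of it is wrong. You claim $[\wh\rho,\varphi_{L-1}]=0$ because $\wh\rho$ is ``supported entirely at the left boundary'' while $\varphi_{L-1}$ lives at the rightmost two sites. But $\wh\rho$ is not a local operator at the left boundary: by its definition it is an entire decorated transfer-matrix row --- a double row of $R$-matrices carrying the horizontal parameter $w$ and acting on \emph{all} $L$ sites, with an indicator marking whether the open path uses the left boundary edge. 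This is why $\wh P_b^{(L)}$ depends on $w$ and why its denominator carries the transfer-matrix factor $\prod_{i=1}^{L}k(1/w,z_i)$. In particular $\wh\rho$ acts nontrivially on sites $L-1$ and $L$, so there is no disjoint-support argument, and indeed no identity of the form $\wh\rho\,\varphi_{L-1}=\varphi_{L-1}\,\wh\rho$ can hold between operators on spaces of different sizes.

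What is actually needed --- and what the paper invokes --- is that $\wh\rho$ satisfies the same pinching recursion as the transfer matrix, namely the analogue of~\eqref{eq:Trecur}:
\begin{equation*}
\wh\rho^{(L)}(z_{i+1}=qz_i)\circ\varphi_i=\varphi_i\circ\wh\rho^{(L-2)}(\hat z_i,\hat z_{i+1})\,.
\end{equation*}
This is an intertwining relation valid \emph{only at the specialisation} $z_{i+1}=qz_i$; it is proved by the same crossing-plus-unitarity collapse of the two columns at positions $i,i+1$ used for~\eqref{eq:Trecur}, together with the check that the left-boundary marking is unaffected by that collapse (the part you correctly flag as bookkeeping, but which is subordinate to the collapse itself). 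Without this relation your $\varphi_{L-1}^\dagger$ cannot be slid past $\wh\rho$, so the contraction to $\bra{\Psi_{L-2}}\wh\rho^{(L-2)}\ket{\Psi_{L-2}}$ does not go through as written. The rest of your outline (the $\prod_{i=1}^{L-2}k(z_{L-1},z_i)^2$ bookkeeping against~\eqref{eq:Zrecur}, and the symmetry reduction covering all $1\leq k<L$ including $k=1$) is fine once this gap is filled.
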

\begin{proof}
This proof is very similar to the previous one, but it also relies on the fact that $\wh \rho$ satisfies the same recursion as the transfer matrix \eqref{eq:Trecur},
\[
\wh\rho^{(L)}(z_{i+1}=qz_i) \circ \varphi_i
= \varphi_i \circ \wh\rho^{(L-2)}(\hat z_i,\hat z_{i+1}).
\qedhere \] \,.
\end{proof}

\subsection{Exact solution}

\subsubsection{Inhomogeneous system}
\label{sec:inhomresult}
\begin{prop}
  The explicit formula for the first site passage probability is
  \begin{equation}
    \label{eq:P1form}
    P_b^{(L)} =
    \frac{\chi_{L-1}(z_2^2,\ldots,z_L^2)\chi_{L+1}(z_1^2,z_1^2,z_2^2,\ldots,z_L^2)}
         {\chi_L(z_1^2,\ldots,z_L^2)^2}.
  \end{equation}
\end{prop}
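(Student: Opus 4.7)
The plan is to follow the standard qKZ strategy used earlier in this paper for $Z_L$: denote the right-hand side of \eqref{eq:P1form} by $Q_L$, and show that $P_b^{(L)}$ and $Q_L$ satisfy the same symmetries, the same $z_L = q^{\pm 1} z_k$ recursions, and have matching degree/polynomiality, with agreement at a base case. Combined, these ingredients force $P_b^{(L)} = Q_L$ by a Lagrange-interpolation-style uniqueness argument.

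First, I would verify that $Q_L$ enjoys the symmetries established for $P_b^{(L)}$ in \propref{P1symm}, namely invariance under permutations of $z_2,\ldots,z_L$ and under $z_i\to 1/z_i$ for $i\geq 2$. The determinantal formula for $\chi_\lambda$ immediately provides full permutation symmetry of its arguments and invariance under $z\to 1/z$. In the factor $\chi_{L+1}(z_1^2,z_1^2,z_2^2,\ldots,z_L^2)$ the variable $z_1$ appears twice: this breaks the symmetry between $z_1$ and the other $z_i$'s in $Q_L$ and matches the distinguished role of site $1$ in $P_b^{(L)}$. The factor $\chi_{L-1}(z_2^2,\ldots,z_L^2)$ does not involve $z_1$ at all, and the denominator $\chi_L^2$ is fully symmetric, so the combined symmetry of $Q_L$ is exactly the one asserted for $P_b^{(L)}$.

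The main step is to check that $Q_L$ obeys the recursion of \propref{P1rec}, namely $Q_L|_{z_L^2 = (qz_k)^{\pm 2}} = Q_{L-2}(\hat z_k,\hat z_L)$ for $1<k<L$. Setting $z_L = qz_k$ and applying the character recursion \eqref{eq:chirecur} to each of $\chi_{L-1}$, $\chi_{L+1}$ and $\chi_L$ produces explicit products of $k$-factors multiplying smaller characters. The three products, two in the numerator and one (squared) in the denominator, should telescope and leave exactly $Q_{L-2}(\hat z_k,\hat z_L)$. I expect this cancellation — in particular for the $\chi_{L+1}$ factor with its repeated argument $z_1^2$, where the recursion on $z_L = q z_k$ still leaves a doubled $z_1^2$ in the reduced character — to be the main bookkeeping obstacle. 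The case $z_L^2 = (qz_k)^{-2}$ follows at once from invariance under $z_L\to 1/z_L$, shown in the previous step.

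Third, I would compare degrees. The denominator of $P_b^{(L)}$ is $\chi_L(z_1^2,\ldots,z_L^2)^2$ and its numerator $\bra{\Psi}\rho\ket{\Psi}$ is a centred Laurent polynomial in each $z_j$ whose degree is controlled by the degrees of the $\psi_\alpha$'s used when solving the $q$KZ equation. A direct count shows that $Q_L\,\chi_L^2$ is a Laurent polynomial of the same bi-degree in each variable. Given the symmetries in $z_2,\ldots,z_L$ and $z_i\to 1/z_i$, the number of distinct recursion points $z_L^2 = (qz_k)^{\pm 2}$ as $k$ ranges over $\{2,\ldots,L-1\}$ is then enough to determine this Laurent polynomial uniquely in $z_L$; induction on $L$ closes the argument. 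Finally, the base case $L=3$ is handled directly by comparing the explicit polynomial obtained by brute force in the worked example above with $\chi_2(z_2^2,z_3^2)\,\chi_4(z_1^2,z_1^2,z_2^2,z_3^2)/\chi_3(z_1^2,z_2^2,z_3^2)^2$, which is a finite check.
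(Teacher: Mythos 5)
Your proposal follows essentially the same route as the paper's proof: establish the symmetries and the $z_L^2=(qz_k)^{\pm 2}$ recursions for both sides, bound the degree of the numerator in each $z_i^2$ by that of $Z_L^2$, check that the recursion points (doubled using the $z_k\to 1/z_k$ invariance) suffice to determine the Laurent polynomial in $z_L^2$, and close the induction with a small-size base case. The only cosmetic difference is that the paper initialises at $L=1$, where $P_b^{(1)}=1$ and $\chi_0=\chi_1=\chi_2=1$, rather than at $L=3$.
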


\begin{proof}
  There are three steps to this proof. First, the degree of the proposed expression must be shown to agree with the definition. Second, enough recursions (or values of the polynomial at specified points) must be found to satisfy the degree. Thirdly, the proposed expression must be shown to be true for a small size example ($L=1$), to initialise the recursion.

  The denominator of \eqref{eq:P1} can be easily shown to be $Z_L^2$, which has a degree of $L-1$ in each variable $z_i^2$, because $L$ is always odd. The action of $\rho$ can not raise the polynomial degree, so the numerator of $P_b^{(L)}$ must have at most the same degree as the denominator. The degree of the numerator of \eqref{eq:P1form} is $(L-3)/2+(L-1)/2=L-2$ in each $z_i^2$.\symbolfootnote{The definition of the transfer matrix implies that the components of the eigenvector, and thus the passage probabilities, are only functions of $z_i^2$ and do not depend on any odd powers of $z_i$. This fact is crucial to these proofs.}

  We view $P_b^{(L)}$ as a polynomial in $z_L^2$ with coefficients in the other $z_i$'s, and to satisfy the degree we need to know the value of the polynomial for at least $2L-1$ values of $z_L^2$. The following recursions, which come from \propref{P1rec}, will give the value of $P_B^{(L)}$ at $4(L-2)$ values of $z_L^2$:
  \[
  P_b^{(L)}|_{z_L^2=(qz_k)^{\pm 2}} = P_b^{(L-2)}(\hat z_k,\hat z_L),\qquad 1<k<L.
  \]
  The proof that these recursions are satisfied by the proposed expression for $P_b^{(L)}$ is straightforward, based on the known recursions for $\chi$ \eqref{eq:chirecur}.

  It thus remains to show that the expression is true for $L=1$, which is easy to do as $P_b^{(1)}$ must trivially be 1, and $\chi_0=\chi_1=\chi_2=1$.
\end{proof}

\begin{prop}
  The explicit formula for the boundary passage probability is
  \begin{equation}
    \label{eq:P0form}
    \wh P_b^{(L)} =
    \frac{-[q][w^2]}{\prod_{i=1}^L k(1/w,z_i)}
    \times
    \frac{\chi_{L+1}(w^2,z_1^2,\ldots,z_L^2)\chi_{L+1}((q/w)^2,z_1^2,\ldots,z_L^2)}
         {\chi_{L}(z_1^2,\ldots,z_L^2)^2}.
  \end{equation}
\end{prop}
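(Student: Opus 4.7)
The plan is to mirror the three-step structure of the preceding proof for $P_b^{(L)}$: establish that both sides of the claimed identity have matching bounded degrees, verify they satisfy the same recursions at enough specializations to pin down the polynomial, and check a small base case.

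First, I would clear denominators, so the identity becomes an equality of Laurent polynomials:
$$ \bra\Psi \wh\rho \ket\Psi \cdot \prod_{i=1}^L k(1/w, z_i) = -[q][w^2] \, \chi_{L+1}(w^2, z_1^2, \ldots, z_L^2) \, \chi_{L+1}((q/w)^2, z_1^2, \ldots, z_L^2), $$
after dividing by the common $Z_L^2 = \chi_L^2$ factor in both numerators. Bounds on the degrees in each $z_i^2$ and in $w^2$ follow from the polynomial structure of the $\chi_{L+1}$ characters (degree $(L-1)/2$ in each argument, since $L$ is odd) and from the observation that $\prod_i k(1/w, z_i)$ is exactly the denominator produced by the $R$-matrices inside $\wh\rho$, so the cleared LHS is polynomial in the $z_i^2$ and $w^2$ of degree at most $L-1$ in each.

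Second, by \propref{P0symm}, $\wh P_b^{(L)}$ is fully symmetric in the $z_i$'s and invariant under each $z_i \to 1/z_i$, so it suffices to view it as a polynomial in $z_L^2$. The recursions from \propref{P0rec},
$$ \wh P_b^{(L)}|_{z_L^2 = (qz_k)^{\pm 2}} = \wh P_b^{(L-2)}(\hat z_k, \hat z_L), \qquad 1 \leq k \leq L-1, $$
provide $4(L-1)$ specialized values — strictly more than in the $P_b^{(L)}$ case, thanks to full rather than partial symmetry — which, together with the inversion symmetry, pin down the polynomial in $z_L^2$. Verifying that the proposed right-hand side satisfies the same recursions is a direct calculation: \eqref{eq:chirecur} applied to each of the three symplectic characters produces products of $k$-factors that telescope and reproduce the $L-2$ formula. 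The base case $L=1$ is immediate, since $\chi_0 = \chi_1 = \chi_2 = 1$ reduces the proposed formula to $-[q][w^2]/k(1/w, z_1)$, which matches the direct computation of $\wh P_b^{(1)}$ from the $L=1$ ground state.

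The main obstacle will be the bookkeeping of the $w$-dependence and the $k$-factor algebra. Unlike $P_b^{(L)}$, which is $w$-independent, here one must simultaneously control the behaviour in $w$; the cleanest route is to observe that the proposed formula already has the prescribed $\prod_i k(1/w, z_i)$ pole structure and the expected $w \leftrightarrow q/w$ symmetry inherent in the two $\chi_{L+1}$ factors, so it is enough to match the $z_i$-dependence at a single generic $w$. Even so, the simplification of combinations such as $k(z_k, w) \, k(z_k, q/w)/k(1/w, q z_k)$ under the specialization $z_L = q z_k$, which should collapse to unity so that the residual $k$-factors on the two sides cancel, is the source of most of the computational friction in the recursion step.
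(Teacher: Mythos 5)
Your proposal follows essentially the same route as the paper's proof: the same three-step scheme (degree bound on numerator and denominator, the $4(L-1)$ specializations from \propref{P0rec} verified against the character recursion \eqref{eq:chirecur} together with a $k$-factor cancellation identity, and the $L=1$ base case via $\chi_0=\chi_1=\chi_2=1$ and a direct count of the $\wh\rho$ configurations). The only caveats are cosmetic: the a priori degree bound on the cleared left-hand side is $L$ rather than $L-1$ in each $z_i^2$ (which still leaves $4(L-1)\geq 2L+1$ specializations for $L\geq 3$), and the cancellation identity you gesture at should read $k(z_k,w)k(z_k,q/w)/\bigl(k(1/w,z_k)k(1/w,qz_k)\bigr)=1$.
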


\begin{proof}
  This is a similar proof to the previous one, with the same three steps.

  The denominator of \eqref{eq:P0} is $Z_L^2$ multiplied by the denominator of $\wh\rho$, which is the same as the denominator of the transfer matrix; that is, $\prod_{i=1}^L k(1/w,z_i)$. Thus the degree of the denominator is $L$ in each $z_i^2$. Again because of the definition of $\wh\rho$ the numerator of $\wh P_b^{(L)}$ will be at most the same as the denominator. The degree of the numerator of \eqref{eq:P0form} is $2(L-1)/2=L-1$ in each $z_i^2$.

  We view $\wh P_b^{(L)}$ as a polynomial in $z_L^2$ with coefficients in the other $z_i$'s and $w$, and to satisfy the degree we need to know the value of the polynomial for at least $2L+1$ values of $z_L^2$. As in the previous proof, we list here recursions which will give the value of $\wh P_b^{(L)}$ at $4(L-1)$ values of $z_L^2$. These recursions come from \propref{P0rec}.
  \[
  \wh P_b^{(L)}|_{z_L^2=(qz_k)^{\pm 2}}=\wh P_b^{(L-2)}(\hat z_k,\hat z_L),\qquad 1\leq k<L.
  \]
  The proof that these recursions are satisfied by the proposed expression for $\wh P_b^{(L)}$ is again based on the known recursions for $\chi$ \eqref{eq:chirecur}, but also relies on properties of $k(a,b)$ which imply that
  \[\frac{k(z_k,w)k(z_k,q/w)}{k(1/w,z_k)k(1/w,qz_k)}
  =\frac{k(z_k/q,w)k(z_k/q,w/q)}{k(1/w,z_k)k(1/w,z_k,q)}=1.\]

  It thus remains to show that the expression is true for $L=1$,
  $$
    \wh P_b^{(1)}
    \stackrel{?}{=} \frac{-[q][w^2]}{k(1/w,z_1)}
    \times
    \frac{\chi_{2}(w^2,z_1^2)\chi_{2}((q/w)^2,z_1^2)}{\chi_{1}(z_1^2)^2}
    =\frac{-[q][w^2]}{k(1/w,z_1)}.
  $$
  There are four configurations of $\wh\rho$, three of which contribute to $\wh P_b^{(1)}$. The one which does not has a weight
  $([qz_1/w][q/z_1w])([qw/z_1][qz_1w])^{-1}$, so
  \begin{align*}
    \wh P_b^{(1)} &= 1-\frac{[qz_1/w][q/z_1w]}{[qw/z_1][qz_1w]}
    = \frac{[qw/z_1][qz_1w]-[qz_1/w][q/z_1w]}{k(1/w,z_1)}
    = \frac{-[q][w^2]}{k(1/w,z_1)} \,. \qedhere
  \end{align*}
\end{proof}

\subsubsection{Homogeneous limit}
\label{sec:homresult}
In this section we will use the homogeneous limit $z_1=\ldots=z_L=1$, $w^2=-q$, which causes the two orientations of the lattice faces to each have a probability of $1/2$.

\begin{prop}
  The expression for $P_b^{(L)}$ in the homogeneous limit is
  \begin{equation}
    \label{eq:P1hom}
    P_b^{(L)}=\frac{A_V(L)A_V(L+2)}{N_8(L+1)^2} \,,
  \end{equation}
  where $A_V(L)$ and $N_8(L)$ are the number of $L\times L$ vertically symmetric alternating sign matrices and cyclically symmetric self-complementary plane partitions of size $L\times L\times L$ respectively, and have the explicit expressions
  \[ A_V(2m+1)=\prod_{i=0}^{m-1}\frac{(3i+2)(6i+3)!(2i+1)!}{(4i+2)!(4i+3)!}
  \,,
  \qquad
  N_8(2m)=\prod_{i=0}^{m-1}\frac{(3i+1)(6i)!(2i)!}{(4i)!(4i+1)!} \,. \]
\end{prop}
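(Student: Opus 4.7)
The plan is to specialise the inhomogeneous formula~\eqref{eq:P1form} directly to the homogeneous point $z_1=\cdots=z_L=1$. At this point the repeated argument $z_1^2,z_1^2$ in the numerator character $\chi_{L+1}(z_1^2,z_1^2,z_2^2,\ldots,z_L^2)$ becomes just two copies of $1$, so the expression collapses to
\[
P_b^{(L)} = \frac{\chi_{L-1}(1,\ldots,1)\;\chi_{L+1}(1,\ldots,1)}{\chi_{L}(1,\ldots,1)^2},
\]
with $k$ ones in each $\chi_k$.

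The remaining work is to identify each of these three character evaluations with the combinatorial count named in the statement. I would invoke the classical evaluations at the unit point of the symplectic character $\chi_k$ associated with the partition $\lambda_j=\lfloor(k-j)/2\rfloor$, which are well established in the $q$KZ / Razumov--Stroganov literature (going back to Okada, Stroganov and Di~Francesco--Zinn-Justin). Writing $L=2m-1$, these evaluations read
\[
\chi_{2m-1}(1,\ldots,1) = N_8(2m),\qquad \chi_{2m-2}(1,\ldots,1) = A_V(2m-1),\qquad \chi_{2m}(1,\ldots,1) = A_V(2m+1).
\]
Substituting them into the display above yields~\eqref{eq:P1hom} immediately. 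A minor sanity check is that at $L=1$, $P_b^{(1)} = A_V(1)A_V(3)/N_8(2)^2 = 1\cdot 1/1 = 1$, consistent with the fact that the unique open path passes through the unique site.

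If one wishes to re-derive the three character evaluations from scratch rather than cite them, the route is to apply Weyl's dimension formula for the symplectic group to the determinantal expression defining $\chi_\lambda^{(k)}$: taking $u_i\to 1$ by l'H\^opital in both numerator and denominator produces a double product in the $\mu_j$ and $\delta_j$, which must then be rearranged, depending on the parity of $k$, into the triangular products that define $A_V$ and $N_8$ in the statement. This last reindexing is essentially the only nontrivial step of the proof, and is exactly what is responsible for $\chi_k(1,\ldots,1)$ giving a CSSCPP count for odd $k$ and a vertically symmetric ASM count for even $k$. Once these three product identities are in hand, the proposition follows by pure substitution, so the main obstacle in the whole argument lies in this combinatorial matching of Weyl products to the $A_V$ and $N_8$ product formulas.
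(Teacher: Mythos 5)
Your route is the same as the paper's: specialise the inhomogeneous formula \eqref{eq:P1form} to $z_1=\cdots=z_L=1$ and quote the unit-point evaluations of the symplectic characters. However, the evaluations you state are wrong as identities: the correct ones (cited in the paper from Di~Francesco) are
\begin{equation*}
\chi_{2m}(1,\ldots,1)=3^{m(m-1)}A_V(2m+1)\,,\qquad \chi_{2m-1}(1,\ldots,1)=3^{(m-1)^2}N_8(2m)\,,
\end{equation*}
so for instance $\chi_3(1,1,1)=6\neq N_8(4)=2$ and $\chi_4(1,1,1,1)=27\neq A_V(5)=3$. You are saved by a cancellation: with $L=2m-1$ the numerator of \eqref{eq:P1form} carries $3^{m(m-1)+(m-1)(m-2)}=3^{2(m-1)^2}$ and the denominator carries $3^{2(m-1)^2}$ as well, so the powers of $3$ drop out of the ratio and \eqref{eq:P1hom} still follows. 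Your $L=1$ sanity check cannot detect this because all the exponents vanish at $m=1$; had you carried out the Weyl-dimension-formula computation you sketch at the end, you would have found the extra factors of $3$. So the argument is structurally the paper's proof and the conclusion is correct, but the intermediate identities need the $3^{m(m-1)}$ and $3^{(m-1)^2}$ prefactors restored, together with a remark that they cancel between numerator and denominator.
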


\begin{proof}
  The result~\eqref{eq:P1hom} is simply obtained from~\eqref{eq:P1form} and the homogeneous limit of the symplectic characters \cite{DF07},
  \begin{align*}
    \chi_{2m}(1,\ldots,1)&=3^{m(m-1)}A_V(2m+1) \,, \\
    \chi_{2m-1}(1,\ldots,1)&=3^{(m-1)^2}N_8(2m) \,. \qedhere
  \end{align*}
\end{proof}

\begin{prop}
  The expression for $\wh P_b^{(L)}$ in the homogeneous limit is
  \begin{equation}
    \label{eq:P0hom}
    \wh P_b^{(L)}=\frac{3}{4^{L}}\frac{A(L)^2}{N_8(L+1)^2A_V(L)^2} \,,
  \end{equation}
  where $A(L)$ is the number of $L\times L$ alternating sign matrices,
  $A(L)=\prod_{i=0}^{L-1}\frac{(3i+1)!}{(L+i)!}$.
\end{prop}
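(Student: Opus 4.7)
The strategy is to substitute $z_1 = \cdots = z_L = 1$ and $w^2 = -q$, with $q = e^{2i\pi/3}$, into the inhomogeneous expression \eqref{eq:P0form}, and separately evaluate the scalar prefactor and the ratio of symplectic characters.

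For the prefactor, I would use $q^3 = 1$ to obtain $[q] = i\sqrt{3}$ and $[w^2] = -q + q^{-1} = -i\sqrt{3}$. Since $k(a,b) = [qb/a]\,[q/(ab)]$, at $a = 1/w$ and $b = 1$ this becomes $k(1/w,1) = [qw]^2$, and $(qw)^2 = q^2 w^2 = -q^3 = -1$ gives $qw = \pm i$, hence $[qw]^2 = -4$. Because $L$ is odd, $\prod_{i=1}^L k(1/w,1) = (-4)^L = -4^L$, and the overall prefactor simplifies to $-[q][w^2]/\prod_i k = 3/4^L$, matching the scalar in \eqref{eq:P0hom}.

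For the ratio of characters, observe first that $(q/w)^2 = q^2/w^2 = -q$, so the two characters in the numerator of \eqref{eq:P0form} coincide, leaving $\chi_{L+1}(-q,1,\ldots,1)^2$. The denominator is handled by the known homogeneous evaluation: writing $L = 2m-1$, the identity $\chi_{2m-1}(1,\ldots,1) = 3^{(m-1)^2}\,N_8(2m)$ already recalled in the proof of \eqref{eq:P1hom} yields $\chi_L(1,\ldots,1)^2 = 3^{2(m-1)^2}\,N_8(L+1)^2$. It then suffices to establish the refined evaluation
$$\chi_{L+1}(-q,\,1,\ldots,1) \;=\; \pm\, 3^{(m-1)^2}\,\frac{A(L)}{A_V(L)},$$
after which the powers of $3$ cancel and the formula \eqref{eq:P0hom} emerges.

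The main obstacle is exactly this refined character identity. It is the one-variable refinement, at the combinatorial point $q^3 = 1$, of the purely symmetric evaluation used for the denominator, and such refinements are known to interpolate between different symmetry classes of alternating sign matrices; analogous evaluations appear in the work of Di Francesco--Zinn-Justin and of Okada on refined ASM enumeration. My plan is either to cite the relevant formula directly from that literature, or to prove it from scratch by applying the defining determinantal formula for $\chi_\lambda$, performing row and column operations that exploit $1 + q + q^{-1} = 0$ to reduce the determinant to a smaller one, and matching the result against the standard Jacobi--Trudi-style expressions for $A(L)$ and $A_V(L)$.
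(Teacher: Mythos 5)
Your reduction of \eqref{eq:P0form} at the homogeneous point is correct and coincides step by step with the paper's own: the prefactor $-[q][w^2]/\prod_{i}k(1/w,1)=(-3)/(-4)^{L}=3/4^{L}$, the observation that $(q/w)^2=-q$ makes the two numerator characters coincide, and the evaluation of the denominator via $\chi_{2m-1}(1,\ldots,1)=3^{(m-1)^2}N_8(2m)$. You have also correctly isolated the single nontrivial ingredient, namely $\chi_{L+1}(-q,1,\ldots,1)=\pm\,3^{(m-1)^2}A(L)/A_V(L)$ (the sign is immaterial since the character enters squared).

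That identity, however, is where essentially all the content of the proposition lies, and you do not establish it: you name two possible strategies (citation, or direct determinant manipulation) without carrying either out, so as written the argument has a genuine gap at its decisive step. For the record, the paper closes it by invoking the factorisation
\begin{equation*}
  s_{\lambda}^{(4m-2)}(u_1,u_1^{-1},\ldots,u_{2m-1},u_{2m-1}^{-1})
  =\chi_{2m}(q,u_1,\ldots,u_{2m-1})\,\chi_{2m}(-q,u_1,\ldots,u_{2m-1})
\end{equation*}
(attributed to a private communication of Zinn-Justin), specialising all $u_i$ to $1$ with $s_{\lambda}^{(2L)}(1,\ldots,1)=3^{L(L-1)/2}A(L)$, and then evaluating the companion factor $\chi_{L+1}(q,1,\ldots,1)$ by the recursion \eqref{eq:chirecur}, which is triggered because the first argument is $q^2=(q\cdot 1)^2$ and yields $k(1,1)^{2m-2}\chi_{2m-2}(1,\ldots,1)=3^{m(m-1)}A_V(L)$; dividing gives $\chi_{L+1}(-q,1,\ldots,1)=3^{(m-1)^2}A(L)/A_V(L)$. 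To complete your version you would need either to reproduce this factorisation argument or to locate the refined evaluation explicitly in the ASM literature; be aware that your fallback of row-reducing the defining determinant is not obviously easier, since at the point $(-q,1,\ldots,1)$ the ratio $\det[u_i^{\mu_j}-u_i^{-\mu_j}]/\det[u_i^{\delta_j}-u_i^{-\delta_j}]$ is a confluent $0/0$ limit that must be resolved before any simplification using $1+q+q^{-1}=0$ can be applied.
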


\begin{proof}
  For $\wh P_b^{(L)}$ in~\eqref{eq:P0form}, taking the homogeneous limit gives us
  \[
  \wh P_b^{(L)} = \frac{(-3)}{(-4)^{L}}
  \frac{\chi_{L+1}(-q,1,\ldots,1)^2}{3^{2(m-1)^2}N_8(L+1)^2},
  \]
  where we have set $L=2m-1$. To simplify the numerator, we use the relation \cite{ZJComm}
  $$
  s_{\lambda}^{(4m-2)}(u_1,u_1^{-1},\ldots,u_{2m-1},u_{2m-1}^{-1})
  =\chi_{2m}(q,u_1,\ldots,u_{2m-1}) \times \chi_{2m}(-q,u_1,\ldots,u_{2m-1}) \,,
  $$
  where $s_\lambda$ is the Schur function for the partition $\lambda$. The homogeneous expression for $s_\lambda$ is
  \[ s_{\lambda}^{(2L)}(1,\ldots,1)=3^{L(L-1)/2}A(L), \]
  and thus $\chi_{L+1}(-q,1,\ldots,1)$ becomes
  \[
  \chi_{L+1}(-q,1,\ldots,1)=\frac{3^{(2m-1)(m-1)}A(L)}{\chi_{L+1}(q,1,\ldots,1)}.
  \]
  We then use the recursion for $\chi_{L+1}(q,1\ldots,1)$ to get
  $$
    \chi_{L+1}(-q,1,\ldots,1)
    =\frac{3^{(2m-1)(m-1)}A(L)}{k(1,1)^{2m-2}\chi_{2m-2}(1,\ldots,1)} \\
    =\frac{3^{(m-1)^2}A(L)}{A_V(L)} \,,
  $$
  from which the result~\eqref{eq:P0hom} follows immediately.
\end{proof}

\subsubsection{Large-$L$ limit}
\label{sec:limresult}
\begin{prop}
  In the limit $L \to \infty$, $P_b^{(L)}$ and $\wh P_b^{(L)}$ as given in \eqref{eq:P1hom} and \eqref{eq:P0hom} have the asymptotic behaviour
  \begin{equation}
    \label{eq:Pb-asym}
    P_b^{(L)} \sim C \ L^{-1/3} \,,
    \qquad
    \wh P_b^{(L)} \sim \wh C \ L^{-1/3} \,,
  \end{equation}
  where
  \[
    C = \frac{9\times 2^{-5/3}\Gamma(1/3)\Gamma(5/6)}{\Gamma(1/6)\Gamma(2/3)} \,,
    \qquad
    \wh C = \frac{2^{14/9}\pi^{1/3}G(1/3)^4 G(5/6)^4}{G(1/2)^4 G(2/3)^2} \,.
  \]
\end{prop}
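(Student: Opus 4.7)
The plan is to reduce everything to asymptotic analysis of the explicit product formulas already given for $A(L)$, $A_V(L)$, $N_8(L)$. Setting $L=2m-1$, each of the three combinatorial quantities is of the form $\prod_{i=0}^{m-1} f(i)$ with $f(i)$ a product of factorials linear in $i$. Taking logarithms turns each product into a sum over $i$, and the natural tool is Stirling's formula together with its consequences: for products of the type $\prod_{i=0}^{m-1}\Gamma(ai+b)$, the leading asymptotics organize themselves into a contribution of order $m^2\log m$ (exponential in $L^2$), a linear-in-$m$ contribution, and a residual $O(\log m)$ piece that feeds the power law, plus a constant term that can be expressed through the Barnes $G$-function using the identity $\prod_{i=0}^{n-1}\Gamma(i+a) = G(n+a)/G(a)$.

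First I would substitute the product formulas for $A_V(2m+1)$, $A_V(2m-1)$ and $N_8(2m)$ into \eqref{eq:P1hom}, take the logarithm, and collect the Stirling contributions of all the factorials $(3i+r)!$, $(4i+r)!$, $(6i+r)!$, $(2i+r)!$ appearing in the numerator and denominator. The central observation to drive the calculation is that the dominant $m^2\log m$ and $O(m^2)$ pieces, as well as the $O(m)$ pieces, cancel exactly between $A_V(L)A_V(L+2)$ and $N_8(L+1)^2$: this has to happen because both quantities grow like $(3\sqrt3/4)^{L^2/2}$ with matched prefactors, consistent with the fact that $P_b^{(L)}\le 1$. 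What survives is precisely the $\log L$ term, which I expect to evaluate to $-\tfrac13\log L$, giving the announced $L^{-1/3}$ scaling. The remaining finite constant $C$ is then assembled from the boundary terms in Stirling (the $\tfrac12\log(2\pi n)$ pieces) together with the Gauss multiplication formula $\Gamma(nz) = n^{nz-1/2}(2\pi)^{(1-n)/2}\prod_{k=0}^{n-1}\Gamma(z+k/n)$ applied at $n=6$, $n=4$, $n=3$, $n=2$, which collapses all intermediate $\Gamma$ values at rational points into the four values $\Gamma(1/6)$, $\Gamma(1/3)$, $\Gamma(2/3)$, $\Gamma(5/6)$ appearing in $C$.

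The argument for $\wh P_b^{(L)}$ follows the same scheme applied to \eqref{eq:P0hom}, with one important refinement. Here the product $A(L)^2$ introduces additional factorials of the form $(L+i)!$ and $(3i+1)!$ for $i=0,\dots,L-1$, so the sums in $\log A(L)$ have $L$ terms rather than $m$ terms; consequently the surviving constant involves products of $\Gamma$ at rational points of the form $\Gamma(a/b)$ raised to powers that depend on $L$ linearly. This is exactly the regime where the Barnes $G$-function appears naturally through $\log \prod_{i=0}^{n-1}\Gamma(i/b+a) \sim (\text{poly in } n) + \log[G(\cdots)]$ and the asymptotic $\log G(z+1) = \tfrac{z^2}{2}\log z - \tfrac{3z^2}{4} + \tfrac{z}{2}\log(2\pi) - \tfrac1{12}\log z + \zeta'(-1) + O(1/z)$. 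Collecting the $G$-contributions from $A(L)^2$, $A_V(L)^2$ and $N_8(L+1)^2$ — the factor $3/4^L$ being precisely what is needed to absorb the exponential-in-$L$ pieces left over from $\log G$ — leaves the combination $G(1/3)^4 G(5/6)^4/[G(1/2)^4 G(2/3)^2]$ with the predicted prefactor.

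The hard part will not be the conceptual structure but the exact bookkeeping: one must carry every constant through several layers (Stirling boundary terms, the constants $\zeta'(-1)$ and $\log(2\pi)$ from $\log G$, and the multiplicative factors produced by Gauss multiplication) and verify that the exponential-in-$L^2$ and exponential-in-$L$ pieces cancel on the nose. Any small error in one of these contributions would be invisible in the power $L^{-1/3}$ but would spoil the constants $C$ and $\wh C$. The cleanest way to organise the computation is to write each of $A(L)$, $A_V(L)$, $N_8(L)$ as a ratio of Barnes $G$-functions evaluated at shifted half-integer and sixth-integer arguments from the outset, apply the Barnes $G$ asymptotic once to each factor, and only at the end use functional equations to reduce the arguments to the canonical set $\{1/6,1/3,1/2,2/3,5/6\}$.
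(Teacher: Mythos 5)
The paper states this proposition without proof, so there is no argument of the authors' to compare yours against; they evidently regard the asymptotics of the three product formulas as routine. Your strategy --- take logarithms, apply Stirling/Barnes-$G$ asymptotics to $A(L)$, $A_V(L)$, $N_8(L)$, check that the $O(L^2)$ and $O(L)$ terms cancel, read off the surviving $\log L$ coefficient, and assemble the constant via the Gauss multiplication and reflection formulas --- is the standard route, and it does lead to the stated answer: I have checked that it reproduces $C$ exactly. The reservation is that, as written, your text is a plan rather than a proof. For a proposition whose entire content is the exponent $-1/3$ and the two constants, the computation \emph{is} the proof, and you have not carried out any of it (you yourself only say you ``expect'' the logarithmic term to come out as $-\tfrac13\log L$). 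Until the bookkeeping is done, nothing is established.

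Two concrete remarks would make the execution much lighter and safer. For $P_b$, do not Stirling each factorial separately: writing $L=2m-1$ and pairing the $i$-th factor of $A_V(2m+1)$ with the $i$-th factor of $N_8(2m)$, every factorial cancels and one is left with the rational function
\[
\frac{A_V(2m+1)}{N_8(2m)}=\prod_{i=0}^{m-1}\frac{3(3i+2)(6i+1)}{2(4i+1)(4i+3)}
=\left(\frac{27}{16}\right)^{m}\prod_{i=0}^{m-1}\frac{(i+\tfrac23)(i+\tfrac16)}{(i+\tfrac14)(i+\tfrac34)}
=\left(\frac{27}{16}\right)^{m}\frac{\Gamma(\tfrac14)\Gamma(\tfrac34)}{\Gamma(\tfrac23)\Gamma(\tfrac16)}\,
\frac{\Gamma(m+\tfrac23)\Gamma(m+\tfrac16)}{\Gamma(m+\tfrac14)\Gamma(m+\tfrac34)}\,,
\]
which is exact. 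Since $P_b^{(L)}=\bigl[A_V(2m+1)/N_8(2m)\bigr]^2\big/\bigl[A_V(2m+1)/A_V(2m-1)\bigr]$, the exponent $-\tfrac13=2(\tfrac23+\tfrac16-\tfrac14-\tfrac34)$ follows from $\Gamma(m+a)/\Gamma(m+b)\sim m^{a-b}$, the only genuine Stirling estimate needed is the single leftover ratio $A_V(2m+1)/A_V(2m-1)\sim (27/16)^{2m-1}(3/4)^{3/2}$, no $m^2\log m$ terms ever appear, and no Barnes $G$-function is required; reflection ($\Gamma(\tfrac14)\Gamma(\tfrac34)=\pi\sqrt2$, $\Gamma(\tfrac13)=2\pi/(\sqrt3\,\Gamma(\tfrac23))$, $\Gamma(\tfrac56)=2\pi/\Gamma(\tfrac16)$) then converts the result into the quoted form of $C$, including the factor $2^{1/3}$ from $m\simeq L/2$. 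For $\wh P_b$ the Barnes function is genuinely needed because of the factors $(L+i)!$ in $A(L)$, and your outline is essentially right, with one inaccuracy: after writing $\prod_{i=0}^{n-1}\Gamma(i+a)=G(n+a)/G(a)$, the values $G(a)$ at fixed rational points enter the constant to \emph{fixed} powers (the $L$-dependent exponents attach to $2$, $3$ and $2\pi$, not to $\Gamma$ or $G$ at rational arguments), and it is these fixed powers that produce the combination $G(\tfrac13)^4G(\tfrac56)^4/[G(\tfrac12)^4G(\tfrac23)^2]$.
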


Hence, the lattice probabilities $P_b$ and $\wh P_b$ follow the limiting behaviour~\eqref{eq:schramm2} predicted by Schramm's formula at $\kappa=6$ with the correct exponent, but with different multiplicative constants. This is because $P_b$ and $\wh P_b$ correspond to fixing a position $j$ and letting $L$ tend to infinity, whereas the scaling limit would require a fixed ratio $x = \pi j/L$. In the following section, we address the determination of $P_{\rm left}$ for general $j$.

\section{Left-passage probabilities}
\label{sec:Pleft}

\subsection{Definitions and properties}

\begin{defn}
  For $j \in \{ 1, \dots, L\}$, we denote by $X_j(z_1, \dots, z_L)$ the probability that the path $\gamma$ passes through the $j$-th horizontal edge in a horizontal section of the type shown in \figref{path}a.
\end{defn}

\begin{defn}
  For $j \in \{ 0, \dots, L+1\}$, we denote by $\wh X_j(w;z_1, \dots, z_L)$ the probability that the path $\gamma$ passes through the $j$-th horizontal edge in a horizontal section of the type shown in \figref{path}b.
\end{defn}

\begin{defn}
  For $j \in \{ 1, \dots, L+1\}$, we denote by $Y_j(w;z_1, \dots, z_L)$ the probability that the path $\gamma$ passes through the $j$-th vertical edge above a horizontal section of the type shown in \figref{path}a.\symbolfootnote{We could also define $\wh Y_j$ in a similar fashion, but this is simply related to $Y_j$ by the transformation $z_k\rightarrow 1/z_k$, $\forall k$.}
\end{defn}

\begin{lm}
  \label{lm:PX}
  Using the convention that the vertices on a horizontal section like in \figref{path}a are numbered  $\{1/2, \dots, L+1/2\}$, the 
  probability that $\gamma$ passes to the left of $(j+1/2)$ is
  $$ P_{j+1/2} = \sum_{\ell=1}^j (-1)^{\ell-1} X_\ell \,. $$
  On a horizontal section like in \figref{path}b, the probability that $\gamma$ passes to the left of $(j+1/2)$ is
  $$ \wh P_{j+1/2} = \sum_{\ell=0}^j (-1)^{\ell} \wh X_\ell \,. $$
  In particular, we have $X_1= P_{3/2}= P_b$ and $\wh{X}_0= \wh{P}_{1/2}= Y_1=\wh P_b$.
\end{lm}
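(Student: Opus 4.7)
The identity
\[
P_{j+1/2} = \sum_{\ell=1}^j (-1)^{\ell-1} X_\ell
\]
is a telescoping statement, so it reduces to the single-step identity
\[
P_{j+1/2} - P_{j-1/2} = (-1)^{j-1}\,X_j \qquad (j=1,\dots,L)
\]
together with the initial value $P_{1/2}=0$. The latter is immediate: the leftmost vertex of the section (at position $1/2$) sits on the wired boundary and therefore, for every configuration, lies on the ``wired'' side of $\gamma$ (the side with $\gamma$ to its right); hence $\gamma$ almost surely does not pass to the left of it. The entire argument thus rests on the single-step identity and its $\wh P$-counterpart.

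To prove the single-step identity, fix a configuration and let $K\subseteq\{1,\dots,L\}$ be the set of horizontal edges of the section crossed by $\gamma$; let $I_{j+1/2}$ denote the indicator that the vertex at position $j+1/2$ lies on the right side of the oriented simple curve $\gamma$, so that $P_{j+1/2}=\mathbb{E}[I_{j+1/2}]$. Two observations suffice. \emph{(i)} As the vertex moves from position $j-1/2$ to position $j+1/2$ across the edge $e_j$, the value of $I$ switches iff $e_j$ is crossed by $\gamma$ (standard topological fact that a simple oriented curve separates a neighbourhood into ``left'' and ``right''). \emph{(ii)} The sign of the switch is $+1$ or $-1$ depending on whether $\gamma$ traverses $e_j$ upward (from below to above the section) or downward, and in the present setup this direction is fixed by the parity of $j$: odd-indexed edges are traversed upward only and even-indexed edges downward only. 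Observation \emph{(ii)} is a consequence of the two-sublattice structure of the medial lattice ${\cal M}$---consecutive section edges have opposite geometric ``slope''---combined with the global orientation of $\gamma$ from $-\infty$ to $+\infty$, with the wired region on its left, imposed by the boundary conditions; this selects a unique traversal direction per sublattice, independently of the rest of the configuration. Combining \emph{(i)} and \emph{(ii)},
\[
I_{j+1/2} - I_{j-1/2} = (-1)^{j-1}\,\mathbb{I}(j\in K),
\]
and taking expectations yields the single-step identity.

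The formula for $\wh P_{j+1/2}$ is proven along the same lines, with the section of \figref{path}b. The only structural change is that the leftmost edge is the boundary edge, indexed $\ell=0$, which lies on the sublattice opposite to the first edge of a type-(a) section; this shifts the sign convention to $(-1)^\ell$ starting at $\ell=0$. The identifications at the end of the lemma then follow directly: $X_1=P_b$ and $\wh X_0 = \wh P_b$ are just the definitions, $P_{3/2}=X_1$ and $\wh P_{1/2}=\wh X_0$ are the $j=1$ and $j=0$ instances of the two formulas, and $Y_1=\wh P_b$ holds because the reflecting boundary condition forces $\gamma$ to pass through the leftmost vertical edge above a type-(a) section precisely when it touches the wired boundary at the same vertical height.

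The one genuinely nontrivial ingredient is the parity--direction rule in observation \emph{(ii)}; its rigorous verification is a local check of the two possible tile configurations at each face of ${\cal M}$ abutting the section, together with the orientation convention on $\gamma$ induced by the boundary conditions. Beyond this geometric fact, everything in the proof is pure telescoping and unpacking of definitions.
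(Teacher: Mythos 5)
Your proof is correct and follows essentially the same route as the paper, which justifies the lemma by exactly this signed-crossing argument: the oriented path $\gamma$ traverses the $j$-th edge upward or downward according to the parity of $j$, and downward crossings must be counted with a negative sign in the left-passage count; you merely make the telescoping and the topological switching of the indicator explicit. One cosmetic slip: per Figure~1 the left boundary is the free one (the wired boundary is on the right, which is why $P_{L+1/2}=1$), but your argument for $P_{1/2}=0$ --- that $\gamma$ stays inside the strip and hence never passes to the left of a point on its left edge --- is unaffected by this mislabelling.
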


The power of $(-1)$ in the above sums can be explained in the following way. The infinite loop, oriented as in \figref{path}, can only pass in one direction through any given edge: For $X_j$, it passes upwards if $j$ is odd and downwards if $j$ is even; for $\wh X_j$ the opposite is true. When calculating the left-passage probabilities all configurations with the path passing downwards through the edge must be counted with a negative sign, thus we arrive at the above expressions.

\begin{lm}
  The probabilities $X$, $\wh X$ and $Y$ are related by the conservation property
  \[
    \forall j \in \{ 1, \dots, L\} \,, \quad X_j + \wh{X}_j = Y_j+Y_{j+1} \,,
  \]
  and the identities for special values of $w$
  \begin{align*}
    \left. Y_j \right|_{w=z_j} &= \left. \wh{X}_j \right|_{w=z_j} \,, & \left. Y_{j+1} \right|_{w=z_j} &= X_j \,, \\
    \left. Y_j \right|_{w=qz_j} &= X_j \,, & \left. Y_{j+1} \right|_{w=qz_j} &= \left. \wh{X}_j \right|_{w=qz_j} \,.
  \end{align*}
\end{lm}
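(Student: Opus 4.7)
Both statements are local properties of the single row of the transfer matrix that connects the section of type (a) to the section of type (b). The conservation identity is flux conservation at an individual face; the special-value identities follow from the fact that the $R$-matrix degenerates to a single tile when its two spectral parameters are tuned.

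\textbf{Plan for $X_j+\wh X_j=Y_j+Y_{j+1}$.} Identify the face at column $j$ in the row carrying horizontal parameter $w$ that is sandwiched between the horizontal sections of types (a) and (b). Its four boundary edges host, respectively, the horizontal edge counted by $X_j$ (bottom), the horizontal edge counted by $\wh X_j$ (top), and the vertical edges counted by $Y_j$ (left) and $Y_{j+1}$ (right). Both tile configurations $R_1$ and $R_2$ in Definition~\ref{defn:R} consist of two non-crossing arcs, and in each case each arc pairs exactly one horizontal boundary edge (top or bottom) with exactly one vertical boundary edge (left or right). Consequently, whenever the infinite path $\gamma$ enters the face, it leaves it using exactly one horizontal edge and exactly one vertical edge. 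Summing over all lattice configurations, this yields
\[
X_j+\wh X_j \;=\; \Pr(\gamma \text{ traverses the face at column }j) \;=\; Y_j+Y_{j+1},
\]
where the mutual exclusivity of ``uses the top'' vs.\ ``uses the bottom'' (and similarly left vs.\ right) makes the probabilities add.

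\textbf{Plan for the special values.} Setting $w=z_j$ in $R(w,z_j)$ gives $[z_j/w]=0$, so only the $R_1$ term survives and the face at column $j$ collapses to a deterministic $R_1$ tile; similarly $w=qz_j$ forces $[qz_j/w]=0$ and the face becomes a deterministic $R_2$ tile. I then read off the two arcs of $R_1$: one connects the bottom edge to one vertical edge, the other the top edge to the other vertical edge. The connectivity is fixed by the pictorial conventions of Definition~\ref{defn:R}, and the four identities in the lemma encode exactly these four arc endpoints. Concretely, in $R_1$ the bottom pairs with the right (giving $X_j=Y_{j+1}$ at $w=z_j$) and the top pairs with the left (giving $\wh X_j=Y_j$ at $w=z_j$); in $R_2$ the bottom pairs with the left (giving $X_j=Y_j$ at $w=qz_j$) and the top pairs with the right (giving $\wh X_j=Y_{j+1}$ at $w=qz_j$). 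Because the face tile is deterministic, ``path uses one of the paired edges'' and ``path uses the other'' are the same event at the level of the whole configuration, so the passage probabilities coincide.

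\textbf{Expected obstacle.} The only subtle point is bookkeeping: one must check that the labels of the horizontal edges in the two sections are aligned (so that the bottom and top of the column-$j$ face really count $X_j$ and $\wh X_j$, not an index-shifted pair), given that section (b) carries an extra edge at each boundary (indices $0$ and $L+1$). Matching the boundary identification $Y_1=\wh X_0=\wh P_b$ from Lemma~\ref{lm:PX} fixes this alignment, and the remaining columns follow. Once the labeling is settled, both parts of the lemma reduce to direct inspection of the two tiles and of the degeneracy of $R(w,z_j)$ at $w\in\{z_j,qz_j\}$.
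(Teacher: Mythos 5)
Your proof takes essentially the same route as the paper's: a local analysis of the face at column $j$, decomposing each probability according to the two tile configurations, noting that each arc of a tile pairs one horizontal with one vertical boundary edge, and observing that the special values $w=z_j$ and $w=qz_j$ freeze the face to a single tile. The only slight imprecision is the intermediate identification of $X_j+\wh X_j$ with the probability that $\gamma$ traverses the face --- the path can in principle use both arcs of the face, in which case each side of the conservation identity equals $2$ at the indicator level rather than $1$ --- but the identity itself is unaffected, so the argument stands.
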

\begin{proof}
The conservation property comes from considering the possible configurations on a face. Each $X$ and $Y$ has two terms corresponding to the two possible configurations, and for each $X$ one of these terms matches one of those for one of the $Y$s. At the special values $w=z_j$ and $w=qz_j$, the face at position $j$ is specialised to one of the configurations, and the identities follow immediately.
\end{proof}

Similarly to $P_b$ and $\wh P_b$, the probabilities $X$, $\wh X$ and $Y$ satisfy some symmetry and recursion relations.
\begin{prop}
  The probabilities $X_j(z_1,\dots,z_L)$ and $\wh X_j(w; z_1,\dots,z_L)$ are symmetric functions of $\{z_1, \dots, z_{j-1}\}$ and $\{z_{j+1}, \dots, z_L\}$ separately. $X_j(z_1,\dots,z_L)$ is invariant under $z_\ell \to 1/z_\ell$ for all $\ell$, and $\wh X_j(w; z_1,\dots,z_L)$ is invariant under $z_\ell \to 1/z_\ell$ for $\ell \neq j$. The probability $Y_j(w; z_1,\dots,z_L)$ is a symmetric function of $\{z_1, \dots, z_{j-1}\}$ and $\{z_j, \dots, z_L\}$ separately, and invariant under $z_\ell \to 1/z_\ell$ for all $\ell$.
\end{prop}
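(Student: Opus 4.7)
The plan is to mirror the arguments of \propref{P1symm} and \propref{P0symm}: each probability is a ratio $\bra{\Psi}\mathcal{O}_j\ket{\Psi}/\langle\Psi|\Psi\rangle$, possibly containing a transfer-matrix row of spectral parameter $w$, where $\mathcal{O}_j$ is a local operator measuring whether $\gamma$ crosses the chosen edge. The $q$KZ equation, together with the unitarity relation \eqref{eq:unitarity}, the interlacing relations \eqref{eq:interlace}--\eqref{eq:boundinterlace}, and (for the operators involving $w$) the fact that $\wh\rho$-type insertions obey the same interlacing as the transfer matrix, will yield every stated symmetry.

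For the permutation symmetries, I would first identify, for each operator $\mathcal{O}_j \in \{X_j, \wh X_j, Y_j\}$, the set of indices $i$ for which $\check R_i(z_i/z_{i+1})$ commutes with $\mathcal{O}_j$. Geometrically, the horizontal edge measured by $X_j$ or $\wh X_j$ is adjacent only to vertical line $j$, so $\check R_i$ commutes with $\mathcal{O}_j$ precisely when $\{i,i+1\}\cap\{j\}=\emptyset$, i.e. $i\leq j-2$ or $i\geq j+1$; the vertical edge measured by $Y_j$ sits between lines $j-1$ and $j$, so $\check R_i$ commutes when $i\neq j-1$. For each such $i$, I would insert the unitarity identity $\check R_i(z_{i+1}/z_i)\check R_i(z_i/z_{i+1})=\id$ into the ratio and push the two factors outward using the $q$KZ equation $\check R_i(z_i/z_{i+1})\ket{\Psi}=\pi_i\ket{\Psi}$ and its adjoint; this reproduces the ratio with $z_i\leftrightarrow z_{i+1}$ swapped. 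Iterating gives symmetry in $\{z_1,\dots,z_{j-1}\}$ in every case, together with symmetry in $\{z_{j+1},\dots,z_L\}$ for $X_j,\wh X_j$ or in $\{z_j,\dots,z_L\}$ for $Y_j$.

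For the inversion symmetries, I would use the boundary conditions \eqref{eq:boundinterlace} and the boundary $q$KZ relations $\ket{\Psi(1/z_1,z_2,\dots,z_L)}=\ket{\Psi(z_1,\dots,z_L)}=\ket{\Psi(z_1,\dots,z_{L-1},1/z_L)}$. These directly give invariance of the ratio under $z_1\to 1/z_1$ and $z_L\to 1/z_L$, provided the measuring operator is invariant under the same transformations; one then propagates the inversion to every $z_\ell$ in the allowed range by combining with the permutation symmetry already established. For $X_j$ and $Y_j$, the operator is constructed purely from face connectivities that depend on $z_j$ only through ratios $\check R$ symmetric under $z_j\leftrightarrow 1/z_j$ (when surrounded by appropriately inversion-symmetric pieces), yielding invariance for all $\ell$. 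For $\wh X_j$, however, the configuration of the $w$-row face at column $j$ that is selected by $\mathcal{O}_j$ has a Boltzmann weight which is genuinely not invariant under $z_j\leftrightarrow 1/z_j$, so this particular inversion must be excluded.

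The main obstacle is the bookkeeping: one must verify the local commutation $[\mathcal{O}_j,\check R_i]=0$ for the exact range of $i$ claimed, and check the behaviour of $\mathcal{O}_j$ at the distinguished column $j$ under $z_j\to 1/z_j$ (for $X_j$ and $Y_j$ this requires identifying why the symmetry survives, and for $\wh X_j$ why it genuinely fails). Once these local diagrammatic identities are established, each of the global symmetries follows by the same algebraic manipulations as in the proofs of \propref{P1symm} and \propref{P0symm}, with no further analytic input required.
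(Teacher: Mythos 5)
Your overall strategy is the one the paper intends --- its proof is literally a pointer back to \propref{P1symm} and \propref{P0symm} --- and your reconstruction of the permutation symmetries (identify the $i$ for which $\Rc_i$ commutes with the measuring operator, insert $\Rc_i(z_{i+1}/z_i)\Rc_i(z_i/z_{i+1})=\id$, push the two factors onto the eigenvectors with the $q$KZ exchange relation) and of the boundary inversions (invert at positions $1$ and $L$ via the boundary $q$KZ relations, then transport any variable lying in a symmetric block to the boundary and back) is correct and matches the paper. Your identification of the commuting ranges of $i$, hence of the two symmetric blocks for each of $X_j$, $\wh X_j$ and $Y_j$, is also right, as is your diagnosis of why the $\ell=j$ inversion fails for $\wh X_j$ (the marked face of the $w$-row at column $j$).

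There is, however, one genuine gap: the invariance of $X_j$ under $z_j\to 1/z_j$, which the proposition asserts for \emph{all} $\ell$ but which your mechanism does not deliver. You justify it by saying the operator ``depends on $z_j$ only through ratios $\Rc$ symmetric under $z_j\leftrightarrow 1/z_j$'', but the operator defining $X_j$ is the purely combinatorial insertion $\rho_j$ (the indicator that the open path crosses the $j$-th site between two link patterns), which contains no $R$-matrices and no spectral parameters at all; the entire $z_j$-dependence of $X_j$ sits in $\ket{\Psi}$ and $\bra{\Psi}$, and neither eigenvector is invariant under inversion of an \emph{interior} variable --- only $z_1$ and $z_L$ enjoy that property. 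Nor can you transport $z_j$ to the boundary, since the operators $\Rc_{j-1}$ and $\Rc_{j}$ needed to move it act on site $j$ and do not commute with $\rho_j$. So this case requires a separate argument; a natural route is through the special-value identities $X_j=\left.Y_j\right|_{w=qz_j}=\left.Y_{j+1}\right|_{w=z_j}$, since $Y_j$ and $Y_{j+1}$ \emph{are} invariant under $z_j\to 1/z_j$, combined with the behaviour of the marked $w$-row under $w\to 1/w$. Note also that for $Y_j$ your worry about the distinguished column is a non-issue: the two blocks $\{z_1,\dots,z_{j-1}\}$ and $\{z_j,\dots,z_L\}$ exhaust all the variables, so every $z_\ell$ reaches a boundary and no extra local argument is needed there.
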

\begin{proof}
  The proof is similar to the ones for \propref{P1symm} and \propref{P0symm}.
\end{proof}

\begin{prop}
  The probabilities $X_j(z_1,\dots,z_L)$ and $\wh X_j(w; z_1,\dots,z_L)$ satisfy the recursion relations
  \begin{align*}
    \left. X_j^{(L)} \right|_{z_1^2 = (qz_\ell)^{\pm 2}} &= X_{j-2}^{(L-2)}(\hat z_1,\hat z_\ell)\,, & \left. \wh X_j^{(L)} \right|_{z_1^2 = (qz_\ell)^{\pm 2}} &= \wh X_{j-2}^{(L-2)}(\hat z_1,\hat z_\ell)\,, & &\text{for $1<\ell<j$,}\\
    \left. X_j^{(L)} \right|_{z_L^2 = (qz_\ell)^{\pm 2}} &= X_j^{(L-2)}(\hat z_\ell,\hat z_L)\,, & \left. \wh X_j^{(L)} \right|_{z_L^2 = (qz_\ell)^{\pm 2}} &= \wh X_j^{(L-2)}(\hat z_\ell,\hat z_L)\,, & &\text{for $j<\ell<L$,}
  \end{align*}
  whereas $Y_j(w; z_1,\dots,z_L)$ satisfies
  \begin{align*}
    \left. Y_j^{(L)} \right|_{z_1^2 = (qz_\ell)^{\pm 2}} &= Y_{j-2}^{(L-2)}(\hat z_1,\hat z_\ell) \,, \qquad \text{for $1<\ell<j$,}\\
    \left. Y_j^{(L)} \right|_{z_L^2 = (qz_\ell)^{\pm 2}} &= Y_j^{(L-2)}(\hat z_\ell,\hat z_L) \,, \qquad \text{for $j\leq \ell< L$.}
  \end{align*}
\end{prop}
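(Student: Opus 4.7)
The plan is to mimic the proofs of Propositions \ref{prop:P1rec} and \ref{prop:P0rec}, which established the analogous recursions for $P_b = X_1$ and $\wh P_b = \wh X_0$. The two essential ingredients are the eigenvector recursion \eqref{eq:Psirecur} and the fact that each of the observables representing $X_j$, $\wh X_j$, $Y_j$ intertwines correctly with the contraction $\varphi_i$ whenever the contracted sites are disjoint from the marked edge.

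First, using the symmetries established in the preceding proposition, I would reduce each stated recursion to a single canonical adjacent form. The invariance under $z_\ell \to 1/z_\ell$ collapses the two branches $z^2 = (qz_\ell)^{\pm 2}$ into one, while the partial symmetries of each observable in the appropriate subsets of the $z_i$'s allow $z_\ell$ to be moved next to $z_1$ or $z_L$. It therefore suffices to verify, for example, $X_j^{(L)}|_{z_2 = qz_1} = X_{j-2}^{(L-2)}(\hat z_1,\hat z_2)$ and $X_j^{(L)}|_{z_L = qz_{L-1}} = X_j^{(L-2)}(\hat z_{L-1},\hat z_L)$, together with the analogous canonical forms for $\wh X_j$ and $Y_j$.

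For each such canonical case the argument then runs exactly as in \propref{P1rec}: apply \eqref{eq:Psirecur} to both $\ket{\Psi_L}$ and $\bra{\Psi_L}$ at $z_{i+1}=qz_i$ with $i\in\{1,L-1\}$, which produces on each side a factor $(-1)^L \prod_{\ell\notin\{i,i+1\}} k(z_i,z_\ell)$ whose square cancels the matching factor appearing in $Z_L^2$ via \eqref{eq:Zrecur}. The numerator $\bra{\Psi_L}\,O\,\ket{\Psi_L}$ then reduces to $\prod k(z_i,z_\ell)^2 \cdot \bra{\Psi_{L-2}}\,\varphi_i^\dag\, O\, \varphi_i\,\ket{\Psi_{L-2}}$, and the desired identity follows provided $\varphi_i^\dag O_j^{(L)} \varphi_i$ coincides with the corresponding observable of the reduced system: namely $O_{j-2}^{(L-2)}$ for $i=1$ (reflecting the relabelling after the two leftmost sites are removed) and $O_j^{(L-2)}$ for $i=L-1$. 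A trivial base case at small $L$ then initialises the recursion.

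The main obstacle will be verifying this intertwining property $\varphi_i^\dag O_j^{(L)} \varphi_i = O_{j'}^{(L-2)}$ for each of the three observables. For $X_j$ and $\wh X_j$ it is geometrically transparent, since the contraction inserts a small link at a face disjoint from the marked horizontal edge and therefore preserves the condition that the open path passes through edge $j$; for $\wh X_j$ one additionally uses, as in \propref{P0rec}, that the $w$-dependent marker $\wh\rho$ satisfies the same $\varphi_i$-intertwining as the transfer matrix, which follows from the crossing--unitarity manipulations used in the proof of \eqref{eq:Trecur}. For $Y_j$ the marker lies on a $w$-dependent vertical edge, and the intertwining is again checked from the Yang--Baxter, unitarity and crossing relations; the index ranges stated for $Y_j$ --- in particular $j\leq \ell$ on the right recursion, made accessible by the symmetry of $Y_j$ in $\{z_j,\dots,z_L\}$ --- are precisely those which ensure that the contracted face remains disjoint from the face carrying the marked vertical edge.
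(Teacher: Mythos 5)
Your proposal is correct and follows essentially the same route as the paper, which simply states that the proof is analogous to those of Propositions~\ref{prop:P1rec} and~\ref{prop:P0rec}; you have accurately unpacked what that analogy requires, namely the reduction to adjacent specialisations via the symmetries, the eigenvector recursion~\eqref{eq:Psirecur}, and the intertwining of each marked observable with $\varphi_i$ together with the attendant index shift.
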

\begin{proof}
  The proof is similar to the ones for \propref{P1rec} and \propref{P0rec}.
\end{proof}

The above relations satisfied by $X_j$, $\wh X_j$ and $Y_j$ are, in principle, sufficent to determine completely these quantities. However, they turn out to be particularly difficult to solve in practice, and we resort to different tools to describe them.

\subsection{Numerical study}

For a given choice of the $z_j$'s, one can find numerically the components $\psi_\alpha$ by the power method, and use these to evaluate $X_j$, $\wh X_j$ and $Y_j$. The left-passage probabilities are then obtained from \lmref{PX}.

For small enough system sizes, we find the left-passage probabilities $P_{j+1/2}$ as rational fractions (see \tabref{Pj}).

\begin{table}[ht]
  \begin{center}
    \begin{tabular}{|r|r|l|}
      \hline
      $L$ & $Z_L$ & $P_j^{(L)} \times Z_L^2$ \\
      \hline
      3 & 2 & 0, 3, 1, 4 \\
      5 & 11 & 0, 78, 22, 99, 43, 121 \\
      7 & 170 & 0, 16796, 4484, 21093, 7807, 24416, 12104, 28900 \\
      9 & 7429 & 0, 29641710, 7721790, 37074705, 12859293 \dots \\
      11 & 920460 & 0, 426943865250, 109785565350, 532943651700, 178807268772,
      605036201854 \dots \\
      \hline
    \end{tabular}
  \end{center}
  \caption{Left-passage probabilities $P_j^{(L)}$ for strips of width $L\leq
    11$. Data obtained by numerical diagonalisation of the transfer matrix.}
  \label{tab:Pj}
\end{table}

We now turn to the convergence of $P_{j+1/2}$ to Schramm's formula~\eqref{eq:schramm}. From \lmref{PX}, we see that $P_{j+1/2}$ is the sum of an alternating sequence, and has oscillations of wavelength $\delta j=1$. This phenomenon appears clearly in~\figref{pl-L21}. For this reason, we define the smooth and oscillatory parts as
$$
\overline{P}_j := \half \left( P_{j-1/2} + P_{j+1/2} \right) \,,
\qquad
\wt{P}_j := \half \left( P_{j-1/2} - P_{j+1/2} \right) = \half (-1)^{j-1} X_j \,.
$$
\begin{figure}[ht]
  \begin{center}
    \includegraphics{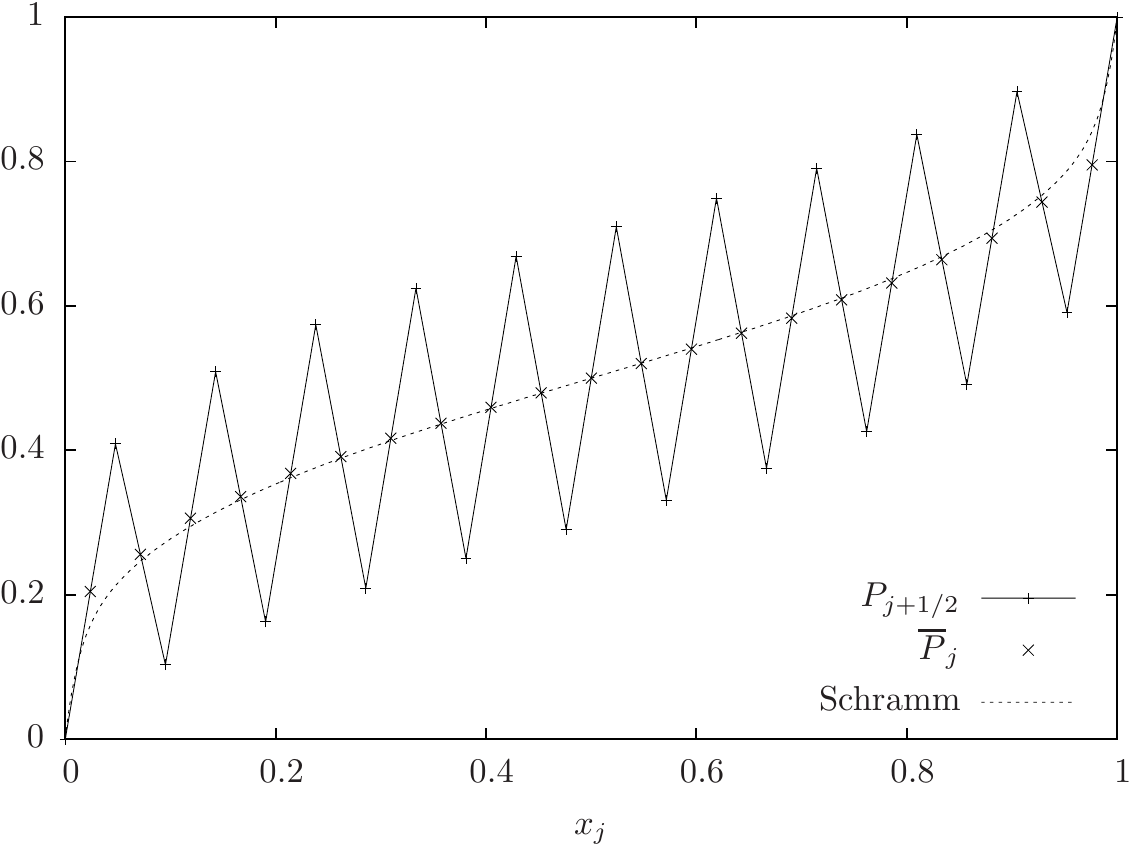}
  \end{center}
  \caption{The left-passage probability $P_{j+1/2}$ and its smooth part
    $\overline{P}_j$ for $L=21$, compared to Schramm's formula~\eqref{eq:schramm}.}
  \label{fig:pl-L21}
\end{figure}
In~\figref{pl-L21} we see that $\overline{P}_j$ is very close to Schramm's formula for $L=21$. In \figref{pl}, we compare the data for $\overline{P}_j$ at various system sizes, and observe very good convergence to Schramm's formula. Finite-size effects are more important near the boundaries, but as we already noted in \secref{Pb}, the scaling of $P_{1/2}$ with $L$ is the one predicted by Schramm's formula.

Finally, in \figref{pl-osc}, we plot the oscillatory part $\wt{P}_j$. This quantity is a lattice effect, and is not predicted directly by Schramm's formula.

\begin{figure}
  \begin{center}
    \includegraphics{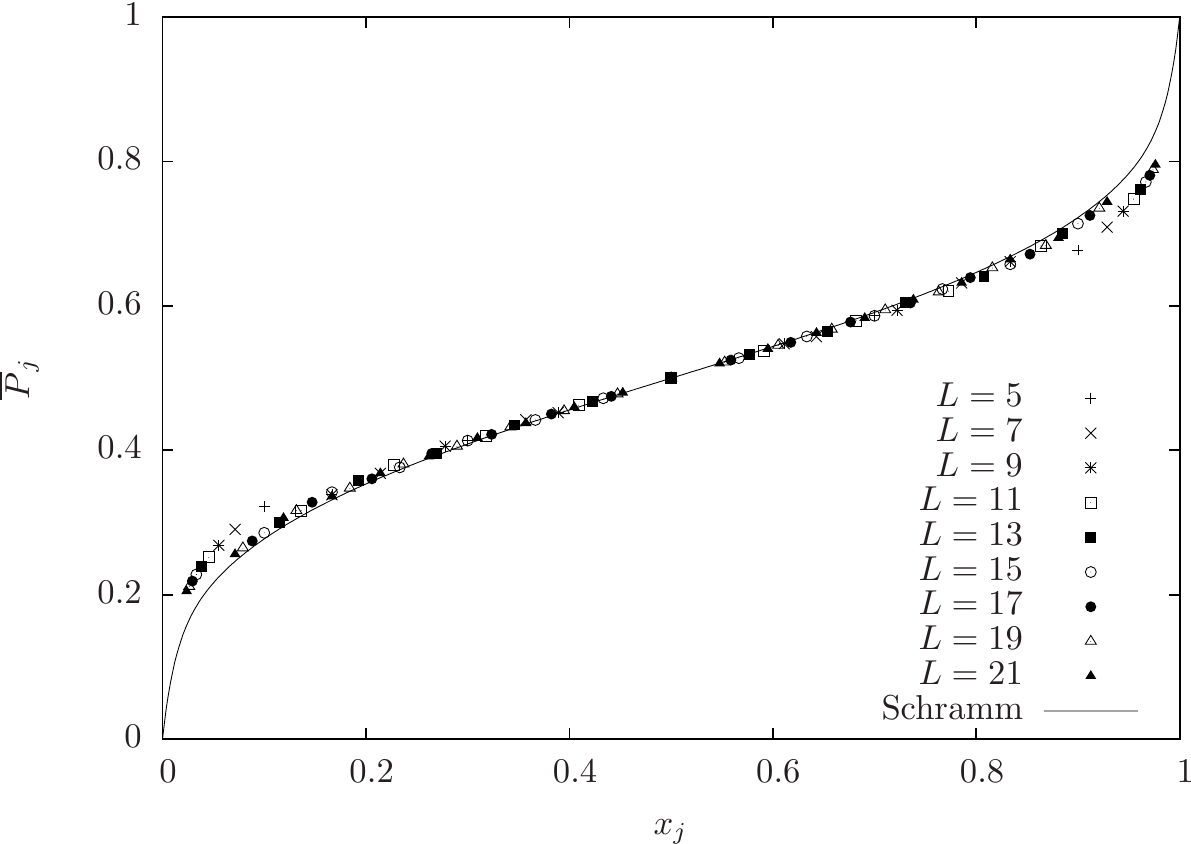}
  \end{center}
  \caption{Smooth part of the left-passage probability, compared to 
    Schramm's formula~\eqref{eq:schramm}.}
  \label{fig:pl}
\end{figure}

\begin{figure}
  \begin{center}
    \includegraphics{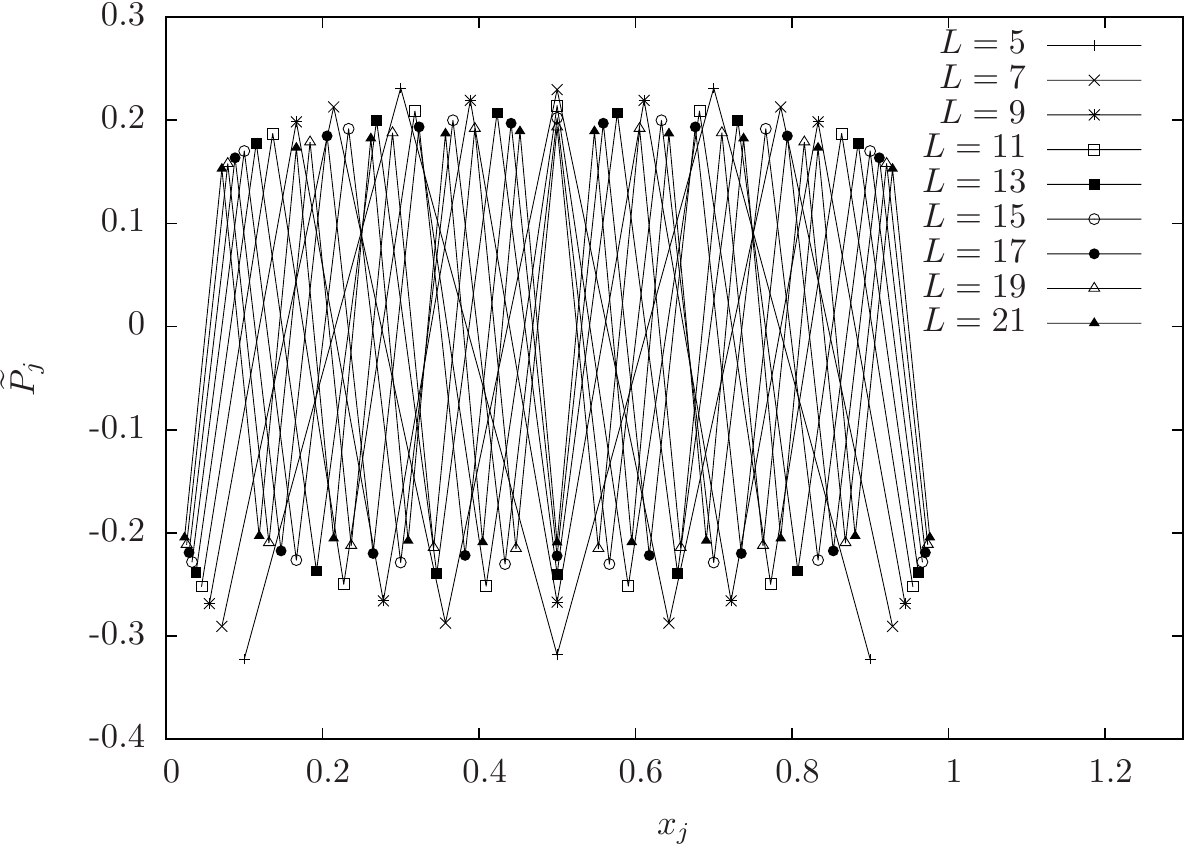}
  \end{center}
  \caption{Oscillatory part of the left-passage probability.}
  \label{fig:pl-osc}
\end{figure}

\subsection{Left-passage probability in the FK model}
\label{sec:FK}
\subsubsection{Mapping to the six-vertex model}

Throughout this section, we remove the restriction on $q$, and we use the algebraic Bethe ansatz notations
\[
  q = e^\eta \,, \qquad
  z_j = e^{-v_j} \,, \qquad
  w = e^{-u} \,.
\]
Moreover, we introduce for convenience $\eta'=\eta+i\pi$, so that
the loop weight reads
$$
n = -(q+q^{-1}) = -2 \cosh \eta = 2 \cosh \eta' \,.
$$
\begin{figure}[ht]
  \begin{center}
    \includegraphics{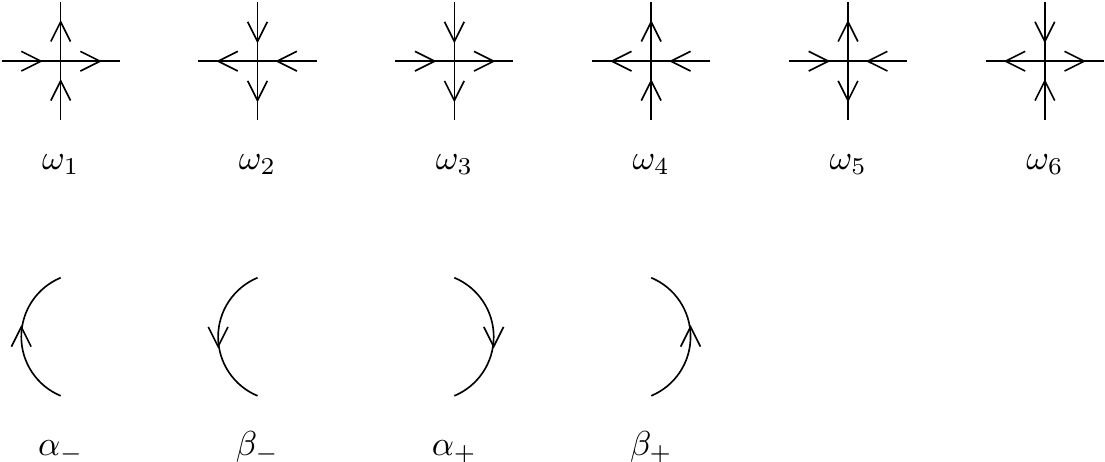}
    \caption{The configurations of the six-vertex model and the associated
      Boltzmann weights.}
    \label{fig:6V}
  \end{center}
\end{figure}

Following~\cite{Bax82}, we distribute the loop weight locally by orienting the loops: to each $\pi/2$ left (resp. right) turn of a loop, we associate a phase factor $e^{\eta'/4}$ (resp. $e^{-\eta'/4}$) in the Boltzmann weights. Forgetting about the loop connectivities results in a six-vertex (6V) model (see \figref{6V}). Using the Boltzmann weights defined by the $R$-matrix in \defnref{R} (with $z=1$ wlog for the moment), the 6V weights can be rescaled to
\begin{equation} \label{eq:6V-bulk}
  \begin{array}{rcl}
  \omega_1=\omega_2 &=& \sinh(\eta+u) \\
  \omega_3=\omega_4 &=& \sinh u \\
  \omega_5 &=& e^{+\eta'/2+u} \sinh \eta \\
  \omega_6 &=& e^{-\eta'/2-u} \sinh \eta \,.
  \end{array}
\end{equation}
Moreover, on the boundary, the loops undergo a half-turn, and hence the boundary weights must be
\begin{equation} \label{eq:6V-bound}
  \alpha_\pm = e^{-\eta'/2} \,, \quad \beta_\pm = e^{+\eta'/2} \,.
\end{equation}
Hence the 6V model resulting from this mapping is described by the matrices
\begin{equation*}
  \ul{R}(u) = \left(\begin{array}{cccc}
    \sinh(\eta+u) & 0 & 0 & 0 \\
    0 & \sinh u & e^{-\frac{\eta'}{2}-u}\sinh \eta & 0 \\
    0 & e^{\frac{\eta'}{2}+u}\sinh \eta & \sinh u & 0 \\
    0 & 0 & 0 & \sinh(\eta+u)
  \end{array}\right) \,,
  \qquad 
  \ul{K}_\pm = \left(\begin{array}{cc}
    e^{-\frac{\eta'}{2}} & 0 \\
    0 & e^{\frac{\eta'}{2}}
  \end{array}\right) \,,
\end{equation*}
and the corresponding transfer matrix is denoted by $\ul{t}_{\rm 6V}$. These matrices are related to the standard 6V ones by the ``gauge transformation'':
\begin{equation}
  \label{eq:gauge}
  \begin{array}{rcl}
    \ul{R}_{ab}(u-v) &=&
    e^{-(\frac{u}{2}+\frac{\eta'}{4})\sigma_a^z -\frac{v}{2} \sigma_b^z}
    \ R_{ab}(u-v)
    \ e^{(\frac{u}{2}+\frac{\eta'}{4})\sigma_a^z +\frac{v}{2} \sigma_b^z} \\
    &=&
    e^{-\frac{u}{2}\sigma_a^z -(\frac{v}{2}-\frac{\eta'}{4}) \sigma_b^z}
    \ R_{ab}(u-v)
    \ e^{\frac{u}{2}\sigma_a^z +(\frac{v}{2}-\frac{\eta'}{4}) \sigma_b^z} \,,
    \\
    \\
    \ul{K}_\pm &=& e^{\pm(u/2+\eta'/4)\sigma^z} \ K_\pm(u)
    \ e^{\pm(u/2+\eta'/4)\sigma^z} \,,
  \end{array}
\end{equation}
where $K_+(u) = 2 e^{\xi_+}K(u+\eta,\xi_+)$, $K_-(u) = 2 e^{\xi_-}K(u,\xi_-)$, and
\begin{equation*}
  R(u) = \left(\begin{array}{cccc}
    \sinh(\eta+u) & 0 & 0 & 0 \\
    0 & \sinh u & \sinh \eta & 0 \\
    0 & \sinh \eta & \sinh u & 0 \\
    0 & 0 & 0 & \sinh(\eta+u)
  \end{array}\right) \,,
  \qquad
  K(u,\xi) = \left(\begin{array}{cc}
    \sinh(\xi+u) & 0 \\
    0 & \sinh(\xi-u)
  \end{array}\right) \,,
\end{equation*}
with the values of the boundary parameters:
$\xi_\pm = \mp \infty$.
It is customary to shift the spectral parameters to define the monodromy matrices~\cite{Skl88}:
$$
u := \lambda-\eta/2 \,,
\qquad
v_j := \xi_j-\eta/2 \,,
$$
and we thus write:
\begin{equation}
  \begin{array}{rcl}
    T(\lambda) &:=& R_{0L}(\lambda-\xi_L) \dots R_{01}(\lambda-\xi_1) \\
    \wh{T}(\lambda) &:=& R_{10}(\lambda+\xi_1-\eta) \dots
    R_{L0}(\lambda+\xi_L-\eta) \\
    t_{\rm 6V}(\lambda) &:=& {\rm Tr_0} \left[
      K_+(\lambda) T(\lambda) K_-(\lambda) \wh{T}(\lambda)
      \right] \,.
  \end{array}
\end{equation}
One can easily show that the transfer matrices before and after the gauge change~\eqref{eq:gauge} are simply related by a similarity transformation
\begin{equation}
  \ul{t}_{\rm 6V} = G^{-1} \ t_{\rm 6V} \ G \,,
  \qquad \text{where} \qquad
  G := \prod_{j=1}^L e^{v_j \sigma_j^z /2} \,.
\end{equation}
If we specialise to a homogeneous system where all the $\xi_j$'s are set to $\eta/2$, the very anisotropic limit $\lambda \to \eta/2$ yields the open XXZ Hamiltonian
\begin{eqnarray}
  {\cal H}_{\rm XXZ} &:=&
  \left.\frac{\partial \log t_{\rm 6V}(\lambda)}{\partial
    \lambda}\right|_{\lambda=\eta/2} \nn \\
  &=& \sum_{j=1}^{L-1} \left[
  \sigma_j^x \sigma_{j+1}^x + \sigma_j^y \sigma_{j+1}^y
  + \cosh \eta \ \sigma_j^z \sigma_{j+1}^z
  \right]
  - \sinh \eta \ (\sigma_1^z-\sigma_L^z) \,.
  \label{eq:HXXZ}
\end{eqnarray}
Note that in the critical regime ($\eta \in i\mathbb{R}$), the boundary terms are imaginary. In the remainder of this section, we shall restrict ourselves to the homogeneous system described by ${\cal H}_{\rm XXZ}$, but our results can be readily generalised to an arbitrary choice of the $\xi_j$'s.

\subsubsection{The left-passage probability as an XXZ correlation function}

\begin{prop}
  In the critical regime $\eta \in i\mathbb{R}$, the following identity holds:
  \begin{equation}
    P_{j+1/2} = \sum_{\ell=1}^j {\rm Re} \left(\frac{\bra{\Psi_0}\sigma_\ell^z\ket{\Psi_0}}
    {\langle \Psi_0 \ket{\Psi_0}} \right) \,,
  \end{equation}
  where $\bra{\Psi_0}$ and $\ket{\Psi_0}$ are the left and right eigenvectors of ${\cal H}_{\rm XXZ}$~\eqref{eq:HXXZ} associated to the lowest energy.
\end{prop}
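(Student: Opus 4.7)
The plan is to reduce the statement, via \lmref{PX}, to the per-edge identity
\[
X_\ell \;=\; (-1)^{\ell-1}\,\mathrm{Re}\left(\frac{\bra{\Psi_0}\sigma_\ell^z\ket{\Psi_0}}{\langle\Psi_0\ket{\Psi_0}}\right),
\]
since $P_{j+1/2}=\sum_{\ell=1}^{j}(-1)^{\ell-1}X_\ell$ and the two alternating signs then cancel inside the sum. The real content is this single identity, which is also the one announced in the statement of results.

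First I would write $X_\ell$ in the loop-model transfer matrix as $X_\ell = \bra\Psi M_\ell \ket\Psi / \langle\Psi\ket\Psi$, where $M_\ell$ is the local operator that projects, between two rows of the transfer matrix, onto configurations whose unique open strand occupies the $\ell$-th horizontal edge of a section of the type of \figref{path}a. I would then pass to the oriented-loop picture of \secref{FK}: every closed loop is decorated with one of two orientations carrying local phase $e^{\pm\eta'/4}$ per quarter-turn, so that summing over orientations restores the weight $n=2\cosh\eta'$ per closed loop, and the open path $\gamma$ is decorated in the same way. In the critical regime $\eta\in i\mathbb{R}$, $\eta'$ is also purely imaginary, so the two orientations of $\gamma$ produce complex-conjugate weights on each configuration, and the unoriented probability $X_\ell$ equals the real part of either oriented expectation. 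Forgetting connectivities turns the sum into the six-vertex transfer matrix $\ul t_{\rm 6V}$ defined by \eqref{eq:6V-bulk}--\eqref{eq:6V-bound}.

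Next I would translate the observable into a spin operator. The event ``$\gamma$ passes through the $\ell$-th horizontal edge'' is detected by the arrow on that edge, which is diagonalised by $\sigma_\ell^z$; however, the medial-lattice orientation convention alternates from column to column, so that the chosen orientation of $\gamma$ corresponds to the eigenvalue $+1$ of $(-1)^{\ell-1}\sigma_\ell^z$, which produces the sign $(-1)^{\ell-1}$. Summing the two orientations with their conjugate phases yields twice the real part, so that
\[
\frac{\bra\Psi M_\ell \ket\Psi}{\langle\Psi\ket\Psi}
\;=\; (-1)^{\ell-1}\,\mathrm{Re}\left(\frac{\bra{\Phi_0}\sigma_\ell^z\ket{\Phi_0}}{\langle\Phi_0\ket{\Phi_0}}\right),
\]
where $\ket{\Phi_0}$ is the dominant eigenvector of $\ul t_{\rm 6V}$. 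The gauge transformation \eqref{eq:gauge}, $\ul t_{\rm 6V}=G^{-1}t_{\rm 6V}\,G$ with $G=\prod_j e^{v_j\sigma_j^z/2}$, is diagonal in the $\sigma^z$ basis, hence preserves all $\sigma_\ell^z$ expectation values and identifies $\ket{\Phi_0}$ with the dominant eigenvector of $t_{\rm 6V}$; in the very anisotropic limit used to derive \eqref{eq:HXXZ}, this is the lowest-energy eigenvector $\ket{\Psi_0}$ of $\mathcal H_{\rm XXZ}$. Summing on $\ell$ then gives the proposition.

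The main obstacle is the orientation bookkeeping for the open path $\gamma$. One has to verify that, in the strip with wired/free boundary conditions, the difference $n_L-n_R$ of left and right turns of $\gamma$ is controlled purely by topology and not by the closed-loop degrees of freedom, so that the two orientations genuinely produce complex-conjugate phases and their sum is exactly twice the real part rather than some other linear combination. One must also pin down the medial-lattice orientation convention precisely enough to obtain the alternating sign $(-1)^{\ell-1}$ and not a uniform sign or a more elaborate phase. Once this bookkeeping is settled, the diagonality of $G$ in the $\sigma^z$ basis, the identification of dominant eigenvectors across the loop/six-vertex correspondence, and the very anisotropic limit producing $\mathcal H_{\rm XXZ}$ are all routine.
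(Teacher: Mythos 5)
Your reduction to the per-edge identity $X_\ell = (-1)^{\ell-1}\,{\rm Re}\,\langle\sigma_\ell^z\rangle$ and the surrounding scaffolding (orientation of loops, passage to the six-vertex model, gauge transformation, anisotropic limit) match the paper. However, the step that carries the real content --- why taking the real part isolates $(-1)^{\ell-1}X_\ell$ --- is justified by a mechanism that is not the right one and would in fact give zero. You attribute the ${\rm Re}$ to the two orientations of the open path $\gamma$ carrying complex-conjugate phases. If that were the operative mechanism, the contribution of the event ``$\gamma$ passes through edge $\ell$'' to the numerator of $\langle\sigma_\ell^z\rangle$ would be proportional to $(-1)^{\ell-1}(w-\bar w)$, which is purely \emph{imaginary} when $\eta\in i\mathbb{R}$, so its real part would vanish rather than produce $X_\ell$. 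In the paper's argument the orientation of $\gamma$ through a given edge is deterministic: on the event ${\cal A}_\ell$ that the strand crossing edge $\ell$ is the open path, one has $\sigma_\ell^z(C)=(-1)^{\ell-1}$, and this event contributes the manifestly real quantity $(-1)^{\ell-1}X_\ell$ directly.

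What your proposal leaves out entirely is the complementary event: $\sigma_\ell^z$ does not ``detect'' $\gamma$, since every horizontal edge carries an arrow whether it is crossed by $\gamma$ or by a closed loop. The paper's proof rests on a trichotomy: the strand through edge $\ell$ is either $\gamma$, or a closed loop enclosing the marked point to the left of $\ell$, or a closed loop enclosing the point to its right. For a closed loop the two orientations are summed with weights $-q$ and $-1/q$, so the last two classes contribute $\pm\tanh\eta$ times real probabilities, yielding $\langle\sigma_\ell^z\rangle = (-1)^{\ell-1}X_\ell + \tanh\eta\,\left(\Pb[{\cal A}'_\ell]-\Pb[{\cal A}''_\ell]\right)$. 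It is this closed-loop term that is purely imaginary for $\eta\in i\mathbb{R}$ and is removed by ${\rm Re}$; without accounting for it, the claimed identity is unsupported (the raw $\langle\sigma_\ell^z\rangle$ is not $(-1)^{\ell-1}X_\ell$). Consequently your ``main obstacle'' paragraph, devoted to the turning number of $\gamma$, is aimed at the wrong difficulty: the bookkeeping that must be done concerns the closed loops through the marked edge, not the orientation phases of the open path.
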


\begin{proof}
  We have used above the mapping of TL loop configurations to 6V arrow configurations, through the orientation of loops. Consider the intermediate model, i.e., the oriented TL (OTL) loops. In this model, a configuration $C$ on the whole lattice has a Boltzmann weight (at the isotropic point $\lambda=-i\pi/2$)
  $$
    W[C] = (-q)^{\# {\rm anti-clockwise\ loops}(C)}
    \times (-1/q)^{\# {\rm clockwise\ loops}(C)} \,,
    $$
    and the partition function is equal to the one of the original TL model
    $$
    {\cal Z}_{\rm OTL} = \sum_{{\rm oriented\ config.}\ C} W[C]
    = \sum_{{\rm unoriented\ config.}\ C} (-q-q^{-1})^{\# {\rm loops}(C)}
    = {\cal Z}_{\rm TL} \,.
  $$
   
  \begin{figure}[ht]
    \begin{center}
      \includegraphics[scale=0.75]{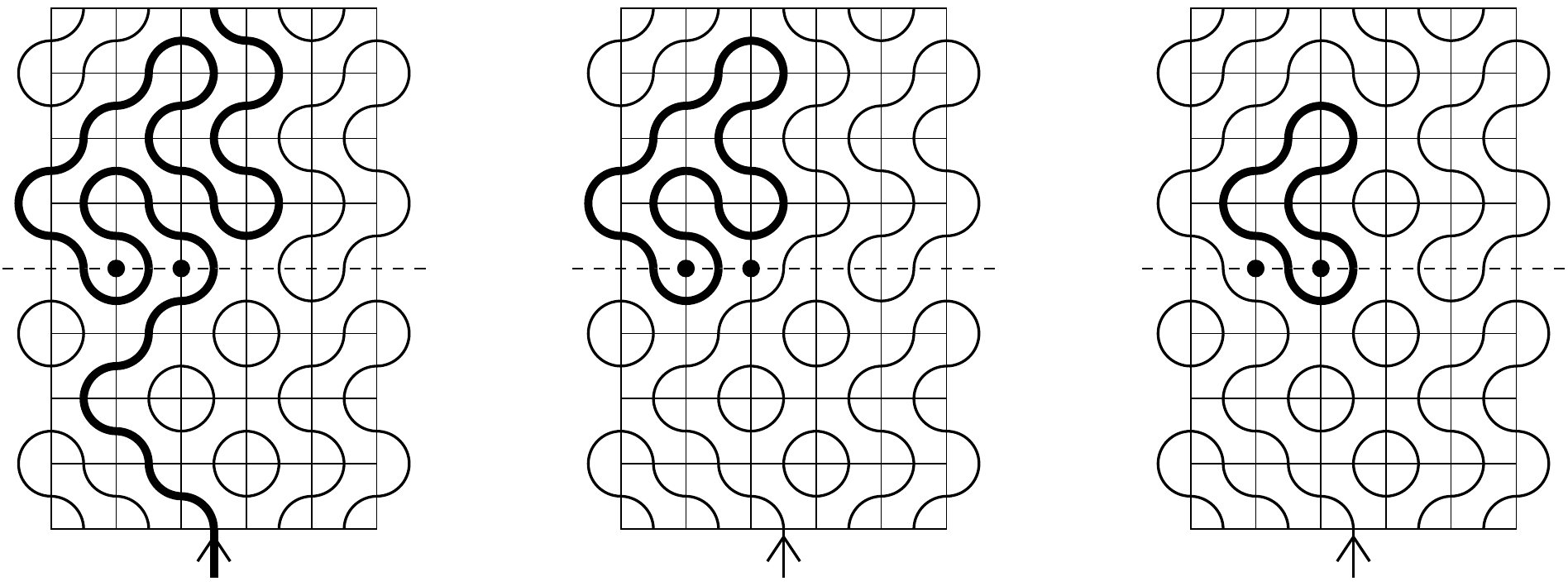}
      \caption{From left to right: loop configurations belonging to the
        subsets ${\cal A}_j$, ${\cal A}'_j$, and ${\cal A}''_j$, for $j=2$
        and $L=5$.}
      \label{fig:path2}
    \end{center}
  \end{figure}

  We now consider a horizontal section of the strip, of the type shown in \figref{path}a. For a given oriented loop configuration $C$, we denote $\sigma_j^z(C) \in \{1,-1\}$ the orientation of the arrow across the $j$-th horizontal edge. There are three possibilities for the loop passing through this edge (see \figref{path2}):
  \begin{itemize}
  \item The loop is the open path $\gamma$. Then $\sigma_j^z(C)=(-1)^{j-1}$.

  \item The loop encloses the marked point on the left of $j$. Then
    $\sigma_j^z(C)=+1$ iff the loop is oriented anti-clockwise.

  \item The loop encloses the marked point on the right of $j$. Then
    $\sigma_j^z(C)=+1$ iff the loop is oriented clockwise.
  \end{itemize}
  We denote by ${\cal A}_j, {\cal A}'_j, {\cal A}''_j$ the corresponding subsets
  of oriented loop configurations.
  The expectation value of $\sigma_j^z$ in the OTL model then reads
  \begin{eqnarray}
    \langle \sigma_j^z \rangle_{\rm OTL} &=& \frac{1}{\cal Z}
    \left[
      (-1)^{j-1} \ \sum_{C \in {\cal A}_j}
      + \tanh \eta
      \ \left(\sum_{C \in {\cal A}'_j}-\sum_{C \in
        {\cal A}''_j} \right)
      \right] W[C] \nn \\
    &=& (-1)^{j-1} \ X_j + \tanh \eta \ (\Pb[C \in {\cal A}'_j]-\Pb[C \in {\cal A}''_j]) \,.
    \label{eq:sigmaz}
  \end{eqnarray}
  Finally, from the mapping described above, we have
  \begin{align*}
    \frac{\bra{\Psi_0}\sigma_j^z\ket{\Psi_0}}
         {\langle \Psi_0 \ket{\Psi_0}}
         &= \langle \sigma_j^z \rangle_{\rm OTL} \,.
         \qedhere
  \end{align*}
\end{proof}

\section{Perspectives}
\label{sec:conclusion}

In the percolation model, we have obtained an exact expression
for the boundary passage probabilities $P_b$ and $\wh{P}_b$,
which are lattice analogs of a boundary observable in SLE$_6$.
In the FK cluster model with generic $Q$, we have related the
left-passage probability to the magnetisation in a solvable open
XXZ spin chain.

Our results have many possible developments. First, within the $q$KZ
approach, we hope to exploit the symmetry and recursion relations
for the probabilities $X_j$, $\wh{X}_j$ and $Y_j$ to find their
explicit expression. This certainly involves a deeper understanding of the
properties of symplectic characters and Schur functions~\cite{dGP11}.
Second, with the algebraic Bethe ansatz~\cite{KitanineKMNST07,KitanineKMNST08}, it seems possible
to calculate
the magnetisation $\langle \sigma_j^z \rangle$ in the open XXZ chain in a
closed form, at least for $j$ close enough to one of the boundaries.
The advantages of this method is that it is valid for any value of the
deformation parameter $q$, and, in cases where a closed form cannot be
achieved, it still produces determinant forms which can be evaluated
numerically for very large system sizes ($L \sim 1000$ sites).
Finally, we note that the probabilities $X_j$, $\wh{X}_j$ and $Y_j$
are very similar to the discretely holomorphic parafermions found
for the TL loop model~\cite{IkhlefC09,RivaC06}, which are the starting point
in the proof of conformal invariance for the Ising model~\cite{ChelkakS09,Smir07}.
Thus the study of these objects on a general domain $\Omega$ may
bring progress in extending this proof to the FK model with generic $Q$.

\section*{Acknowledgments}

The authors thank J.~de Gier, C.~Hagendorf, A.~Mays, V.~Terras, and P.~Zinn-Justin
for fruitful discussions. This work was supported by the European Research
Council (grant CONFRA 228046).


\end{document}